\newcommand{\mynewtheorem}[2]{
  \newaliascnt{#1}{dummy}
  \newtheorem{#1}[#1]{#2}
  \aliascntresetthe{#1}
  \expandafter\def\csname #1autorefname\endcsname{#2}
}
\theoremstyle{plain}
\newtheorem{definition}{Definition}
\def\NAT@spacechar{~}
\newcommand{\influence}[1][k,\ell]{{\sc{\ensuremath{(#1)}-Influence}}\xspace}
\newcommand{\conninfluence}{{\sc{Connected $(k,\ell)$-Influence}}\xspace}
\newcommand{\maxopinfluence}{{\sc{Max Open $k$-Influence}}\xspace}
\newcommand{\maxclinfluence}{{\sc{Max Closed $k$-Influence}}\xspace}
\newcommand{\maxclkplusuninfluence}{{\sc{Max Closed $(k+1)$-Influence}}\xspace}
\newcommand{\ids}{{\sc{Independent Dominating Set}}\xspace}
\newcommand{\mcs}{{\sc{Monotone Circuit Satisfiability}}\xspace}
\newcommand{\ds}{{\sc{Dominating Set}}\xspace}
\newcommand{\clique}{{\sc{Clique}}\xspace}
\newcommand{\densestksubgraph}{{\sc{Densest $k$-Subgraph}}\xspace}
\newcommand{\minvc}{{\sc{Min Vertex Cover}}\xspace}
\newcommand{\maxis}{{\sc{Max Independent Set}}\xspace}
\newcommand{\TSS}{\textsc{Target Set Selection}\xspace}
\newlength{\atextwidth}
\newcommand{\N}{\mathbb{N}}
\DeclareMathOperator{\thr}{thr}
\def \eps {\varepsilon}
\newcommand{\p}{\mathsf{P}}
\newcommand{\np}{\mathsf{NP}}
\newcommand{\zpp}{\mathsf{ZPP}}
\newcommand{\fpt}{\mathsf{FPT}}
\newcommand{\wone}{\mathsf{W[1]}}
\newcommand{\wtwo}{\mathsf{W[2]}}
\newcommand{\wx}[1]{\mathsf{W[#1]}}
\DeclareMathOperator{\Smax}{max}
\newcommand{\mytabref}[1]{[\hyperref[#1]{Th.~\ref*{#1}}]}
\newcommand{\etal}{\textit{et al.}~}
\newcommand{\ie}{\textit{i.e.}~}
\tikzstyle{vertex}=[circle, fill=white, draw, inner sep=2pt, minimum width=4pt]
\newcommand{\problemdec}[3]{
  \vspace{1mm}
\noindent\fbox{
  \begin{minipage}{\atextwidth}
  \begin{tabular*}{\textwidth}{@{\extracolsep{\fill}}lr} #1 \\ \end{tabular*}
  {\bf{Input:}} #2  \\
  {\bf{Question:}} #3
  \end{minipage}
  }
  \vspace{1mm}
}
\newcommand{\problemopt}[3]{
  \vspace{1mm}
\noindent\fbox{
  \begin{minipage}{\atextwidth}
  #1 \\
  {\bf{Input:}} #2  \\
  {\bf{Output:}} #3
  \end{minipage}
  }
  \vspace{1mm}
}
\newcommand{\appendixproof}[2]{
}
\newcommand{\qedfill}[0]{}
\begin{document}

\title{Parameterized Approximability of Maximizing the Spread of Influence in Networks\footnote{An extended abstract appeared in the  Proceedings of the 9th Annual International Computing and Combinatorics Conference (COCOON'13), volume 7936 of \emph{LNCS}, pages 543-554. Springer, 2013.}}


\author{Cristina Bazgan\footnote{LAMSADE - CNRS UMR 7243, PSL, Universit\'e Paris-Dauphine (France). \texttt{\{bazgan,chopin,florian.sikora\}@lamsade.dauphine.fr}}~~\footnote{\noindent Institut Universitaire de France}
\and Morgan Chopin$^{\dagger}$
\and Andr\'e Nichterlein\footnote{\noindent Institut f\"ur Softwaretechnik und Theoretische Informatik, TU Berlin, Germany. \texttt{andre.nichterlein@tu-berlin.de}}
\and Florian Sikora$^{\dagger}$
}

\date{}

\maketitle






\begin{abstract}
In this paper, we consider the problem of maximizing the spread of influence through a social network.
Given a graph with a threshold value~$\thr(v)$ attached to each vertex~$v$, the spread of influence is modeled as follows: A vertex~$v$ becomes ``active'' (influenced) if at least $\thr(v)$ of its neighbors are active.
In the corresponding optimization problem the objective is then to find a fixed number of vertices to activate such that the number of activated vertices at the end of the propagation process is maximum.
We show that this problem is strongly inapproximable in fpt-time with respect to (w.r.t.) parameter $k$ even for very restrictive thresholds. 
In the case that the threshold of each vertex equals its degree, we prove that the problem is inapproximable in polynomial time and it becomes $r(n)$-approximable in fpt-time w.r.t. parameter $k$ for any strictly increasing function $r$. 
 Moreover, we show that the decision version is $\wone$-hard w.r.t. parameter $k$ but becomes fixed-parameter tractable on bounded degree graphs.
\end{abstract}








\section{Introduction}
Optimization problems that involve a diffusion process in a graph are well studied \cite{peleg02,kempe2003,chen2009,aazami2009,dreyer2009,chang2009,benzwi2011,reddy2011}.
Such problems share the common property that, according to a specified \emph{propagation rule}, a chosen subset of vertices activates all or a fixed fraction of the vertices, where initially all but the chosen vertices are inactive.
Such optimization problems model the spread of influence or information in social networks via word-of-mouth recommendations, of diseases in populations, or of faults in distributed computing~\cite{peleg02,kempe2003,dreyer2009}.
One representative problem that appears in this context is the \textit{influence maximization} problem introduced by Kempe~\etal \cite{kempe2003}.
Given a directed graph, the task is to choose a fixed number of vertices such that the number of activated vertices at the end of the propagation process is maximized.
The authors show that the problem is polynomial-time ($\frac{e}{e-1} + \eps$)-approximable for any $\eps > 0$ under some stochastic propagation models, but $\mathsf{NP}$-hard to approximate within a ratio of $n^{1-\eps}$ for any $\eps > 0$ for general propagation rules.

In this paper, we use the following deterministic propagation model.
We are given an undirected graph, a threshold value~$\thr(v)$ associated to each vertex $v$, and the following propagation rule: a vertex~$v$ becomes active if at least~$\thr(v)$ many neighbors of~$v$ are active.
The propagation process proceeds in several rounds and stops when no further vertex becomes active.
Given this model, finding and activating a minimum-size vertex subset such that all or a fixed fraction of the vertices become active is known as the \textit{minimum target set selection} (MinTSS) problem introduced by Chen~\cite{chen2009}.
It has been shown $\np$-hard even for bipartite graphs of bounded degree when all thresholds are at most two~\cite{chen2009}.
Moreover, the problem was surprisingly shown to be hard to approximate within a ratio $O(2^{\log^{1-\eps}n})$ for any $\eps > 0$, even for constant degree graphs with thresholds at most two and for general graphs when the threshold of each vertex is half its degree (called \textit{majority} thresholds) \cite{chen2009}.
If the threshold of each vertex equals its degree (\textit{unanimity} thresholds), then the problem is polynomial-time equivalent to the vertex cover problem~\cite{chen2009} and, thus, admits a $2$-approximation and is hard to approximate with a ratio better than $1.36$~\cite{DS02}.
Concerning the parameterized complexity, the problem is shown to be~$\wtwo$-hard with respect to (w.r.t.) the solution size, even on bipartite graphs of diameter four with majority thresholds or thresholds at most two~\cite{NNUW12}.
Furthermore, it is~$\wone$-hard w.r.t. each of the parameters ``treewidth'', ``cluster vertex deletion number'', and ``pathwidth'' \cite{benzwi2011,CNNW12}.
On the positive side, the problem becomes fixed-parameter tractable w.r.t. each of the single parameters ``vertex cover number'', ``feedback edge set size'', and ``bandwidth'' \cite{NNUW12,CNNW12}.
If the input graph is complete, or has a bounded treewidth and bounded thresholds then the problem is polynomial-time solvable~\cite{NNUW12,benzwi2011}.

Here, \sloppy we study the maximization problem of MinTSS, called \textit{maximum $k$-influence} (Max$k$Inf) where the objective is to find $k$ vertices to activate such that the total number of activated vertices at the end of the propagation process is maximized.
Since both optimization problems have the same decision version, the parameterized as well as $\np$-hardness results directly transfer from MinTSS to Max$k$Inf.
We show that also Max$k$Inf is hard to approximate and, confronted with the computational hardness, we study the parameterized approximability of Max$k$Inf.

\paragraph{Our results}
Concerning the approximability of the problem, there are two possibilities of measuring the value of a solution: counting the vertices activated by the propagation process including or excluding the initially chosen vertices (denoted by \maxclinfluence and \maxopinfluence, respectively).
Observe that whether or not counting the chosen vertices might change the approximation factor. In this paper, we consider both cases and our approximability results are summarized in \autoref{tab:results}.
%

\begin{table}
\centering
\resizebox{\textwidth}{!}{
\begin{tabular}{|c|c|c|c|c|c|}
   \hline
    &   & \multicolumn{2}{c|}{\maxopinfluence} & \multicolumn{2}{c|}{\maxclinfluence} \\
   \hline
    Thresholds & Bounds & poly-time & fpt-time & poly-time & fpt-time \\
   \hline
   \multirow{2}{*}{General} & Upper & $n$ & $n$ & $n$ & $n$ \\
           & Lower & $n^{1-\eps}$,$\forall \eps >0$ & $n^{1-\eps}$,$\forall \eps >0$  & $n^{1-\eps}$,$\forall \eps >0$  & $n^{1-\eps}$,$\forall \eps >0$  \\
   \hline
   \multirow{2}{*}{Constant} & Upper  & $n$ & $n$ & $n$ & $n$ \\
           & Lower & $n^{1-\eps}$,$\forall \eps >0$ & $n^{1-\eps}$,$\forall \eps >0$  & $n^{1-\eps}$,$\forall \eps >0$  & $n^{1-\eps}$,$\forall \eps >0$ \mytabref{th:not-n-approx-cst} \\
   \hline
   \multirow{2}{*}{Majority} & Upper  &   $n$    & $n$ & $n$ & $n$ \\
                              & Lower  & $n^{1-\eps}$,$\forall \eps >0$  &   $n^{1-\eps}$,$\forall \eps >0$    & $n^{1-\eps}$,$\forall \eps >0$ & $n^{1-\eps}$,$\forall \eps >0$\mytabref{th:not-n-approx} \\
   \hline
   \multirow{2}{*}{Unanimity} & Upper  & $2^k$ \mytabref{th:op-2k-approx}      & $r(n), \forall r$ \mytabref{cor:op-rn-approx} & $2^k$   & $r(n), \forall r$ \\
                              & Lower  & $n^{1-\eps}$,$\forall \eps >0$ \mytabref{th:noapprox-unanimity} & ?      & $1+\varepsilon$ \mytabref{th:no-ptas-regular} & ? \\
   \hline
\end{tabular}
}
\caption{Table of the approximation results for \maxopinfluence and \maxclinfluence. A~?~symbol represents an open question.} 
\label{tab:results}
\end{table}

While MinTSS is both constant-approximable in polynomial time and fixed-parameter tractable for the unanimity case, this does not hold anymore for our problem. 
Indeed, we prove that, in this case, \maxclinfluence (resp. \maxopinfluence) is strongly inapproximable in polynomial-time and the decision version, denoted by \influence, is~$\wone$-hard w.r.t. the combined parameter $(k, \ell)$ where~$\ell$ denotes the number of vertices activated during the propagation process. 
However, we show that \maxclinfluence (resp. \maxopinfluence) becomes approximable if we are allowed to use fpt-time and \influence gets fixed-parameter tractable w.r.t combined parameter~$(k,\Delta)$, where~$\Delta$ is the maximum degree of the input graph.


Our paper is organized as follows. In \autoref{sec:preliminaries}, after  introducing some preliminaries, we establish some basic lemmas. In \autoref{sec:paraminapprox} we study   \maxopinfluence and \maxclinfluence with majority thresholds and thresholds at most two. In \autoref{sec:unanimitythresholds} we study the case of unanimity thresholds in general graphs and in bounded degree graphs.  Conclusions are provided in \autoref{sec:conclusions}.


%
%
%

\section{Preliminaries \& Basic Observations} \label{sec:preliminaries}


In this section, we provide basic backgrounds and notation used throughout this paper, give the statements of the studied problems, and establish some lemmas.

\paragraph{Graph terminology} Let $G=(V,E)$ be an \textit{undirected graph}.  For a subset $S\subseteq V$, $G[S]$ is the subgraph induced by $S$. The \textit{open neighborhood} of a vertex $v \in V$, denoted by $N(v)$, is the set of all neighbors of $v$. The \textit{closed neighborhood} of a vertex $v$, denoted $N[v]$, is the set $N(v) \cup \{v\}$.
Furthermore, for a vertex set~$V' \subset V$ we set~$N(V') = \bigcup_{v\in V'} N(v)$ and~$N[V'] = \bigcup_{v\in V'} N[v]$.
The set $N^{k}[v]$, called the $k$-neighborhood of $v$, denotes the set of vertices which are at distance at most $k$ from $v$ (thus $N^1[v]= N[v]$).
The \textit{degree} of a vertex $v$ is denoted by $\deg_G(v)$ and the \emph{maximum degree} of the graph~$G$ is denoted by~$\Delta_G$.
We skip the subscript if~$G$ is clear from the context.
Two vertices are \textit{twins} if they have the same neighborhood. They are called \textit{true twins} if they are moreover neighbors, \textit{false twins} otherwise.


\paragraph{Parameterized complexity}
A parameterized problem $(I,k)$ is said \textit{fixed-parameter tractable} (or in the class $\fpt$) w.r.t. parameter $k$ if it can be solved in $f(k)\cdot|I|^c$ time, where $f$ is any computable function and $c$ is a constant (one can see \cite{DF99,Nie06}).
The parameterized complexity hierarchy is composed of the classes $\fpt \subseteq \wone \subseteq \wtwo \subseteq \dots \subseteq \mathsf{W[P]}$.
A $\wone$-hard problem is not fixed-parameter tractable (unless $\fpt=\wone$) and one can prove $\wone$-hardness by means of a \emph{parameterized reduction} from a $\wone$-hard problem.
A parameterized reduction a mapping of an instance~$(I,k)$ of a problem~$A_1$ in $g(k)\cdot |I|^{O(1)}$ time (for any computable~$g$) into an instance $(I',k')$ for~$A_2$ such that $(I,k)\in A_1\Leftrightarrow (I',k')\in A_2$ and $k'\le h(k)$ for some~$h$.

\paragraph{Approximation} Given an  optimization problem $Q$ and an instance $I$ of this problem, we denote by  $|I|$ the size of $I$, by $opt_{Q}(I)$ the optimum value of $I$ and by $val(I,S)$  the value of a feasible solution $S$ of $I$.
For any feasible solution $S$ of $I$, we assume that $|S|$ is polynomially bounded in $|I|$ \ie $|S| \leq |I|^{O(1)}$.

\sloppy The {\em performance ratio\/} of $S$ (or {\it approximation factor}) is $r(I,S)=\max\left\{\frac{val(I,S)}{opt_{Q}(I)}, \frac{opt_{Q}(I)}{val(I,S)}\right\}.$ The  {\em error} of $S$, $\eps(I,S)$, is defined by~${\eps(I,S)= r(I,S)-1}$. For a function $f$ (resp. a constant $c>1$), an algorithm is a $f(n)$-approximation (resp. a $c$-approximation) if for any instance $I$ of $Q$ it returns a solution $S$ such that $r(I,S) \leq f(n)$ (resp. $r(I,S) \leq c$).

An optimization problem is polynomial-time \textit{constant approximable} (resp. has a \textit{polynomial-time approximation scheme}) if, for some constant $c > 1$ (resp. every constant $\varepsilon > 0$), there exists a polynomial-time $c$-approximation (resp. $(1 + \varepsilon)$-approximation) for it.

An optimization problem is \textit{$f(n)$-approximable in fpt-time w.r.t. parameter~$k$} if there exists an $f(n)$-approximation running in time $g(k)\cdot|I|^{O(1)}$, where $k$ is a given positive integer called parameter and $g$ is any computable function~\cite{marx06}. 

The notion of an $E$-reduction ({\it error-preserving} reduction)
was introduced  by Khanna \emph{et al.}~\cite{KMSV99}. A problem $Q$ is
called {\it $E$-reducible} to a problem $Q'$, if there exist
polynomial-time computable functions $f$, $g$ and a constant
$\beta$ such that
\begin{itemize}
\item $f$ maps an instance $I$ of $Q$ to an instance $I'$ of $Q'$
such that $opt(I)$ and $opt(I')$ are related by a polynomial
factor, i.e. there exists a polynomial $p(n)$ such that
$opt(I')\leq p(|I|) opt(I)$, \item $g$ maps solutions $S'$ of $I'$
to solutions $S$ of $I$ such that $\eps(I,S)\leq \beta
\eps(I',S')$.
\end{itemize}

An important property of an  $E$-reduction is that it can be applied
uniformly to all levels of approximability; that is, if $Q$ is
$E$-reducible to $Q'$ and $Q'$ belongs to $\cal{C}$ then $Q$
belongs to $\cal{C}$ as well, where $\cal{C}$ is a class of
optimization problems with any kind of approximation guarantee.

It is worth noting that the investigated problems in this paper are in fact \textit{cardinality constrained problems}. Recall that a problem of this kind asks for finding a solution of $k$ elements that optimizes an objective function~\cite{cai08}.
For such problems a natural choice for the parameter is the cardinality~$k$ of the solutions.

\paragraph{Problems definition}
Let $G=(V,E)$ be an undirected graph and let~$\thr\colon V \to \N$ be a threshold function.
In this paper, we consider majority thresholds \ie $\thr(v) = \lceil \frac{\deg(v)}{2} \rceil$ for each $v \in V$, unanimity thresholds \ie $\thr(v) = \deg(v)$ for each $v \in V$, and constant thresholds \ie $\thr(v) \leq c$ for each $v \in V$ and some constant $c > 1$.
Initially, no vertex is active and we select a subset $S \subseteq V$ of $k$ vertices.
The propagation unfolds in discrete steps.
At time step $0$, only the vertices in $S$ are activated.
At time step $t+1$, a vertex $v$ is activated if and only if the number of its activated neighbors at time $t$ is at least $\thr(v)$.
We apply the rule iteratively until no more activations are possible.
Given that $S$ is the set of initially activated vertices $\sigma[S]$ is the set of all activated vertices at the end of the propagation process and $\sigma(S)$ is the set $\sigma[S] \setminus S$.
The optimization problems we consider are then defined as follows.


\problemopt{\maxopinfluence}{A graph $G=(V,E)$, a threshold function $\thr:V \to \N$, and an integer $k$.}{A subset $S \subseteq V$, $|S| \leq k$ such that $|\sigma(S)|$ is maximum.}

Similarly, the \maxclinfluence problem asks for a set $S$ such that $|\sigma[S]|$ is maximum. The corresponding decision version \influence is also studied. 
Notice that in this case considering either~$\sigma[S]$ or~$\sigma(S)$ is equivalent.



\problemopt{\influence}{A graph $G=(V,E)$, a threshold function $\thr:V \to \N$, and two integers $k$ and $\ell$.}{Is there a subset $S \subseteq V$, $|S| \leq k$ such that $|\sigma(S)| \geq \ell$ ?}





\paragraph{Basic results}

In the following, we state and prove some lemmas that will be used later in the paper.

\begin{lemma} \label{lem:op-implies-cl} 
Let $r$ be any computable function. If \maxopinfluence is $r(n)$-approximable then \maxclinfluence is also $r(n)$-approximable where $n$ is the instance size.
\end{lemma}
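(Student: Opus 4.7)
The idea is to run the hypothetical $r(n)$-approximation $\mathcal{A}$ for \maxopinfluence as a black box on the very same instance $(G,\thr,k)$ and return its solution unchanged as an answer to \maxclinfluence. The key observation is that for any seed set $S$ the two objective values differ by exactly $|S|$ vertices, namely $|\sigma[S]|=|\sigma(S)|+|S|$, since the closed activated set is the union of the open activated set with the (disjoint) initial seed. In particular, this relation also links the two optima: applying it to an optimal closed seed set $S^{\star}$ of size at most $k$ yields $OPT_{cl}=|\sigma[S^{\star}]|=|\sigma(S^{\star})|+|S^{\star}|\le OPT_{op}+k$.

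The first step is to call $\mathcal{A}$ on $(G,\thr,k)$ and obtain a set $S$ with $|S|\le k$ and $|\sigma(S)|\ge OPT_{op}/r(n)$. Before using $S$, I would pad it with arbitrary vertices so that $|S|=k$. This padding is harmless because the propagation rule is monotone: enlarging the seed set can never undo any activation, so neither $|\sigma(S)|$ nor $|\sigma[S]|$ can decrease. After padding, $S$ is still a feasible \maxopinfluence solution with the same or better value, and it is a feasible \maxclinfluence solution.

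It then remains to check the ratio for \maxclinfluence. Chaining the identity, the approximation guarantee of $\mathcal{A}$, and the inequality $OPT_{cl}\le OPT_{op}+k$ derived above, one gets
\[
|\sigma[S]| \;=\; |\sigma(S)| + k \;\ge\; \frac{OPT_{op}}{r(n)} + k \;\ge\; \frac{OPT_{cl}-k}{r(n)} + k \;=\; \frac{OPT_{cl}}{r(n)} + k\left(1-\frac{1}{r(n)}\right) \;\ge\; \frac{OPT_{cl}}{r(n)},
\]
using $r(n)\ge 1$ in the last step. Thus the output is an $r(n)$-approximation for \maxclinfluence. The whole construction adds only polynomial overhead over $\mathcal{A}$, so the running time class (polynomial-time or fpt-time) is preserved.

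The argument is essentially a one-line calculation, so there is no real obstacle beyond formalising the two mildly delicate points: (i) that one may assume $|S|=k$ without loss (handled by monotonicity of the propagation), and (ii) that the identity $|\sigma[S]|=|\sigma(S)|+|S|$ holds, which follows immediately from the definition $\sigma[S]=\sigma(S)\cup S$ together with $\sigma(S)\cap S=\emptyset$.
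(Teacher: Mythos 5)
Your proposal follows the same route as the paper's proof: run the open-influence algorithm on the very same instance, use the identity $|\sigma[S]|=|\sigma(S)|+|S|$ together with $OPT_{cl}\le OPT_{op}+k$, and chain the inequalities using $r(n)\ge 1$; the paper's argument is exactly this calculation (with $|S|=k$ assumed implicitly). The one step you add that needs repair is the padding: it is \emph{not} true that $|\sigma(S)|$ cannot decrease when the seed set is enlarged, since a vertex that was in $\sigma(S)$ and is moved into $S$ is no longer counted in the open objective (only the closed value $|\sigma[S]|$ is monotone). Hence, after padding with arbitrary vertices, the inequality $|\sigma(S)|+k\ge OPT_{op}/r(n)+k$ that your chain relies on may fail. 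The fix is immediate: pad with vertices outside $\sigma[S]$ (and if fewer than $k-|S|$ such vertices remain, the padded set activates all of $V$ and is trivially within the ratio), after which your computation goes through verbatim.
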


\begin{proof}
Let $A$ be an $r(n)$-approximation algorithm for \maxopinfluence. Let $I$ be an instance of \maxclinfluence and $opt(I)$ its optimum value. When we apply $A$ on $I$ it returns a solution $S$
such that $|\sigma(S)| \geq \frac{opt(I)-k}{r(n)}$ and then $|\sigma[S]| = k+|\sigma(S)| \geq \frac{opt(I)}{r(n)}$.
\qedfill
\end{proof}

\begin{lemma} \label{lem:k-implies-n}
If an optimization problem is $r_1(k)$-approximable in fpt-time w.r.t. parameter~$k$ for \textbf{some} strictly increasing function $r_1$ depending solely on $k$ then it is also $r_2(n)$-approximable in fpt-time w.r.t. parameter~$k$ for \textbf{any} strictly increasing function $r_2$ depending solely on the instance size~$n$.
\end{lemma}

\begin{proof}
Let $r_1^{-1}$ and $r_2^{-1}$ be the inverse functions of $r_1$ and $r_2$, respectively. Let~$I$ be an instance of a maximization problem with size~$n = |I|$ (the proof is analogous for minimization problems). We distinguish the following two cases.

\textbf{Case 1:} $k \leq r_1^{-1}(r_2(n))$. In this case, we apply the $r_1(k)$-approximation algorithm and directly get a solution $S$ such that $r(I,S) \leq r_1(k) \leq r_1(r_1^{-1}(r_2(n))) = r_2(n)$ in time $f(k) \cdot n^{O(1)}$ for some computable function~$f$.

\textbf{Case 2:} $k > r_1^{-1}(r_2(n))$. We then have $n < r_2^{-1}(r_1(k))$ and thus we can solve the instance~$I$ by exhaustively checking every solution $S$ of $I$ and return the one with the largest $val(I,S)$ value. Since we have $|S| \leq n^{O(1)}$ (see the discussion above), we know that there are at most $2^{n^{O(1)}} \leq 2^{r_2^{-1}(r_1(k))^{O(1)}}$ different solutions assuming, without loss of generality, that the solutions are encoded in binary. It follows that the running time in this case is $2^{r_2^{-1}(r_1(k))^{O(1)}} = f(k)$ for some computable function~$f$.
This completes the proof.
\qedfill
\end{proof}

As an illustration of this lemma, if a problem admits a polynomial-time~$k$-approximation then we can approximate this problem within any arbitrarily small ratio depending on the instance size in fpt-time \textit{e.g.} $\log(\log(\ldots\log(n))$.

It is worth pointing out that a problem which is proven inapproximable in fpt-time obviously implies that it is not approximable in polynomial time with the same ratio.
Therefore, fpt-time inapproximability can be considered as a ``stronger'' result than polynomial-time inapproximability.

\section{Parameterized inapproximability}\label{sec:paraminapprox}

In this section, we consider the parameterized approximability of both \maxclinfluence and \maxopinfluence. We show that these problems are $\wtwo$-hard
to approximate within $n^{1 - \eps}$  for any $\eps >0$ for majority thresholds and thresholds at most two.
To do so, we use the following polynomial-time reduction from \ds as the starting point.
The \ds problem asks, given an undirected graph~$G=(V,E)$ and an integer~$k$, whether there is a vertex subset~$S \subseteq V$, $|S| \le k$, such that~$N[S] = V$.


\paragraph{Basic Reduction} 
Given an instance~$I=(G=(V,E), k)$ of \ds we construct the instance~$I'=(G' = (V',E'), \thr, k, |V'|)$ of \influence[k,|V'|] as follows.
For each vertex~$v \in V$, we add two vertices~$v^t$ and~$v^b$ ($t$ and $b$ respectively standing for \emph{top} and \emph{bottom}) to~$V'$ as well as the edge $v^t v^b$ to~$E'$.
For each edge~$uv \in E$, add the edges~$u^t v^b$ and~$u^b v^t$ to $E'$.
Finally, set  $\thr(v^t) = \deg_{G'}(v^t)$ and $\thr(v^b) = 1$ for every top vertex $v^t$ and every bottom vertex $v^b$, respectively. This completes the reduction (see \autoref{fig:basic_dom_set}).

We claim that $I$ is a yes-instance of \ds if and only if $I'$ is a yes-instance of \influence[k,|V'|].
For the forward direction, suppose there exists a dominating set $S \subseteq V$ in $G$ of size $k$. Consider the solution $S'\subseteq V'$   containing the corresponding top vertices. After the first step, all bottom vertices are activated since they have thresholds one and $S$ is a dominating set. Finally, after the second step, all top vertices are activated too.
For the reverse direction, suppose there is a  subset $S'\subseteq V'$ of size $k$ in $G'$ such that~$\sigma[S']=V'$. 
We can assume without loss of generality that $S'$ contains no bottom vertex. 
Since all bottom vertices are activated we have that~$\{v_i:v_i^t\in S'\}$ is a dominating set in~$G$.

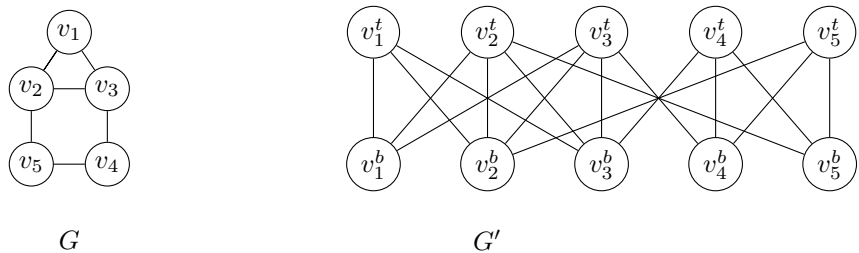
\begin{figure}
\centering
\begin{tikzpicture}[scale=1,transform shape]
\node[vertex,draw] (v5) at (0,0) {$v_5$};
\node[vertex,draw] (v4) at (1,0) {$v_4$};
\node[vertex,draw] (v3) at (1,1) {$v_3$};
\node[vertex,draw] (v2) at (0,1) {$v_2$};
\node[vertex,draw] (v1) at (0.5,1.75) {$v_1$};
\draw (v1) -- (v2) -- (v3) -- (v4) -- (v5) -- (v2) -- (v1) -- (v3);
\node () at (0.5,-1) {$G$};
\begin{scope}[xshift=3cm]
\foreach \x in {1,...,5} {
	\node[vertex,draw] (vt\x) at (1.5*\x,1.75) {$v_ {\x}^t$};
	\node[vertex,draw] (vb\x) at (1.5*\x,0) {$v_ {\x}^b$};
	\draw (vt\x) -- (vb\x);
}
\draw (vt1) -- (vb2);
\draw (vt2) -- (vb1);
\draw (vt1) -- (vb3);
\draw (vt3) -- (vb1);
\draw (vt2) -- (vb3);
\draw (vt3) -- (vb2);
\draw (vt4) -- (vb3);
\draw (vt3) -- (vb4);
\draw (vt4) -- (vb5);
\draw (vt5) -- (vb4);
\draw (vt2) -- (vb5);
\draw (vt5) -- (vb2);
\node () at (3,-1) {$G'$};
\end{scope}
\end{tikzpicture}
\caption{Sample construction of the bipartite graph $G'$  from a graph $G$ of \ds. All vertices $v_i^t$, $1 \leq i \leq 5$ have thresholds $\deg_{G'}(v_i^t)$ while all vertices $v_i^b$, $1 \leq i \leq 5$ have thresholds 1.}\label{fig:basic_dom_set}
\end{figure}

\paragraph{Inapproximability results} We are now ready to prove the main results of this section.

\noindent
\begin{theorem} \label{th:not-n-approx}
For any $\eps >0$, \maxclinfluence and \maxopinfluence with majority thresholds cannot be approximated within $n^{1-\eps}$ in fpt-time w.r.t. parameter~$k$ even on bipartite graphs, unless $\fpt=\wtwo$.
\end{theorem}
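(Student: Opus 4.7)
Plan. The plan is to prove the claim by a gap-amplifying parameterized reduction from \ds, which is $\wtwo$-hard with respect to $k$. Given an instance $(G,k)$ of \ds with $|V(G)|=n$, I will construct in time $n^{O(1/\eps)}$ a bipartite graph $G^*$ with majority thresholds on $N = n^{\Theta(1/\eps)}$ vertices together with a seed budget $k^* = k + O(1)$, such that a dominating set of size $k$ in $G$ translates into a seed of size $k^*$ that activates all of $V(G^*)$, while the absence of such a dominating set forces every seed of size $k^*$ to activate at most $N^{\eps}$ vertices. Combined with \autoref{lem:op-implies-cl}, this gap yields the claimed $n^{1-\eps}$ fpt-inapproximability for both \maxclinfluence and \maxopinfluence.

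Starting from the bipartite graph $G'$ of \autoref{lem:basic}, whose top vertices have threshold equal to their degree and whose bottom vertices have threshold one, I convert $G'$ to a majority-threshold graph in two steps. For each top vertex $v^t$ of degree $d$, I attach $d-1$ private threshold-one pendants: this doubles the degree so that majority becomes exactly $d$, and a simple one-round argument shows that these pendants cannot contribute to the activation of $v^t$ before $v^t$ itself becomes active, so the semantics of the basic reduction is preserved. The bottom vertices cannot be fixed this way, since majority threshold one would require degree at most two. I would therefore replace each bottom vertex by a ``majority OR-gadget'': a path of degree-three, threshold-two vertices where every chain vertex is additionally adjacent to a ``helper'' vertex that is forced into every seed; the helper supplies one vote for free, so any single active input meets the threshold and activation propagates bidirectionally along the chain. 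Two helpers placed on opposite sides of the bipartition preserve bipartiteness and cost two extra seed slots, giving $k^* = k + 2$.

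For the gap amplification I attach to each top vertex $v^t$, through one of its pendants, a chain of $q = n^{\Theta(1/\eps)}$ extra threshold-one leaves. In the YES case, every top vertex activates, every OR-gadget activates and every amplification chain fires, for a total of $\Theta(nq) = N$ active vertices. In the NO case, the selected seed leaves some vertex $u_0$ undominated, hence the OR-gadget of $u_0$ cannot be triggered from its input side, and no $v^t$ with $u_0 \in N_G[v]$ can activate; a cascade analysis then bounds the number of ever-activated top vertices by $O(k)$, giving a total activation of $O(kq) + \mathrm{poly}(n) \le N^{\eps}$ for a proper choice of $q$.

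The main technical obstacle is the NO-case cascade analysis: because the OR-chains propagate bidirectionally, activation that enters through a readout tap on the output side could in principle re-emerge as input to another chain and trigger still further top vertices. I expect to resolve this either by designing asymmetric, one-way readout gadgets (so that a tap ``written to'' by an active top vertex cannot in turn trigger the chain it sits on), or by the counting observation that each extra pendant vote costs the adversary a full seed slot, thereby limiting the total number of bypassed domination constraints to $k^*$. Once back-propagation is ruled out, the cascade stops at the first undominated vertex, the gap analysis goes through, and the claimed $n^{1-\eps}$ fpt-inapproximability for \maxclinfluence follows; the corresponding statement for \maxopinfluence is then immediate from \autoref{lem:op-implies-cl}.
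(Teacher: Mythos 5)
Your overall architecture (reduce from \ds, start from the basic reduction of \autoref{lem:basic}, repair the thresholds to majority with pendant and OR-style gadgets, amplify with long threshold-one structures, finish via \autoref{lem:op-implies-cl}) matches the paper's, and your pendant trick for the top vertices is sound. But the gap amplification step is broken, and I do not see how to repair it in the form you propose. You attach a chain of $q$ threshold-one leaves to \emph{each top vertex individually}. In the NO case the seed itself consists of (up to) $k$ top vertices, and each seeded top vertex immediately fires its own private chain, so the NO-value is already $\Omega(kq)$ while the YES-value is $\Theta(nq)=\Theta(N)$. The achievable gap is therefore at most $n/k$, measured against an instance of size $N=\Theta(nq)$. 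Working out your own constraint ``$O(kq)+\mathrm{poly}(n)\le N^{\eps}$'': the first term forces $q\le (n^{\eps}/k)^{1/(1-\eps)}\approx n^{\eps/(1-\eps)}$, while swamping the $\mathrm{poly}(n)$-many non-amplified vertices (the basic reduction alone has $\Theta(n^2)$ of them once you add the edge/OR gadgets) forces $q\ge n^{(c-\eps)/\eps}$ for the relevant constant $c\ge 2$. These are incompatible for small $\eps$, so there is no ``proper choice of $q$'' and the claimed $N^{1-\eps}$ gap cannot be extracted from this construction.

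The missing idea is that the large payoff must be gated behind an AND of \emph{all} the domination constraints, not handed out per top vertex. The paper does exactly this: it adds layers of vertices $x_i^1,\ldots,x_i^L$ where each $x_i^1$ is adjacent to \emph{every} bottom vertex (and to the forced vertex $w$), with majority threshold calibrated so that $x_i^1$ fires only when all bottom vertices are active, i.e.\ only when the chosen top vertices dominate $G$. The $nL$ amplification vertices and their pendants then contribute $n^{\beta+1}$ in the YES case and nothing in the NO case, while everything outside the amplification structure is bounded by $n^4$; that is where the $n^{1-\eps}$ gap comes from. Your secondary worries are real but more easily handled: the paper deals with seeds omitting the forced vertex $w$ by observing that then no $x_i^1$ can ever fire, and it rules out back-propagation by noting that a top vertex activated through its bottom neighbours has, by unanimity-like accounting of its majority threshold, no capacity left to activate a further bottom vertex. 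If you rebuild your amplification as a global AND over the bottom vertices, those two arguments (plus your pendant trick) essentially reconstruct the paper's proof.
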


\begin{proof}
By \autoref{lem:op-implies-cl}, it suffices to show the result for \maxclinfluence. 
We  provide a polynomial-time reduction from \ds to \maxclkplusuninfluence with majority thresholds.
In this reduction, we will make use of the $q$-\emph{edge} gadget, for some integer $q$. An $q$-edge between two vertices $u$ and $v$
consists of $q$ vertices of threshold one adjacent to both $u$ and $v$. 




Given an instance $I=(G=(V,E),k)$ of \ds with $n = |V|$, $m=|E|$, we define an  instance $I'$ of \maxclkplusuninfluence. We start with the \emph{basic reduction} and modify $G'$ and the function $\thr$ as follows.
Replace every edge $v^tv^b$ by an $(k+2)$-edge between $v^t$ and $v^b$. Moreover, for a given constant~$\beta = \frac{8}{\eps}-5$, let $L = \lceil n^{\beta} \rceil$ and we add $nL$ more vertices $x_{1}^1, \ldots, x_{n}^1, \ldots, x_{1}^L, \ldots, x_{n}^L$.  For $i=1,\ldots,n$, vertex  $x_{i}^1$ is adjacent to all the bottom vertices. Moreover, for any $j=2,\ldots,L$, each $x_{i}^j$ is adjacent to  $x_{k}^{j-1}$, for any $i,k\in\{1,\ldots,n\}$.   We also add a vertex $w$ and an $n+(k+2)(\deg_G(v)-1)$-edge between $w$ and $v^b$, for any bottom vertex $v^b$. For $i=1,\ldots,n$, vertex $x_{i}^1$ is adjacent to  $w$.
For  $i=1,\ldots,n$ add $n$ pending-vertices (\ie degree one vertices) adjacent to $x_{i}^L$. For any vertex
$v^t$ add $(\deg_G(v)+1)(k+2)$ pending-vertices adjacent to $v^t$. Add also $n+n^2+(k+2)(2m-n)$ pending-vertices adjacent to $w$.
All vertices of the graph $G'$ have the majority thresholds (see also \autoref{fig:reduc2}).


\begin{figure}[t]
	\begin{center}
		\begin{tikzpicture}[scale=1,transform shape]
			\begin{scope}[yshift=-3cm]
				\node[vertex,draw] (v5) at (0,0) {$v_5$};
				\node[vertex,draw] (v4) at (1,0) {$v_4$};
				\node[vertex,draw] (v3) at (1,1) {$v_3$};
				\node[vertex,draw] (v2) at (0,1) {$v_2$};
				\node[vertex,draw] (v1) at (0.5,1.75) {$v_1$};
				\draw (v1) -- (v2) -- (v3) -- (v4) -- (v5) -- (v2) -- (v1) -- (v3);
			\end{scope}
			\begin{scope}[xshift=4cm]
				\node[vertex,draw,star,star points=30,star point ratio=0.7] (u) at (0,-1.25) {$w$};

				\foreach \x in {1,...,5} {
					\node[vertex,draw,star,star points=30,star point ratio=0.7] (vt\x) at (\x*1.5-0.5,1.75) {$v_ {\x}^t$};

				  \ifthenelse{\x = 4}{
					  \node[vertex,draw] (v0-\x) at (\x*1.5-0.5,0) {$v_ {\x}^b$};
				  }{
				    \ifthenelse{\x = 5}{
					    \node[vertex,draw] (v0-\x) at (\x*1.5-0.5,0) {$v_ {\x}^b$};
            }{
					    \node[vertex,draw] (v0-\x) at (\x*1.5-0.5,0) {$v_ {\x}^b$};
            }
				  }

					\draw[ultra thick] (vt\x) -- (v0-\x);
          \draw[ultra thick, bend right=5] (u) to (v0-\x);
				}
				\foreach \x / \y in {1/2, 1/3, 2/3, 2/5, 3/4, 4/5} {
					\draw[ultra thick] (vt\x) -- (v0-\y);
					\draw[ultra thick] (vt\y) -- (v0-\x);
				}
				\foreach \i in {1,...,4} {
					\foreach \x in {1,...,5} {
						\ifthenelse{\i = 3}{
							\node[] (v\i-\x) at (\x*1.5-0.5,-1.75*\i) {$\vdots$};
						}{
						  \ifthenelse{\i = 4}{
                \node[vertex,draw,star,star points=30,star point ratio=0.7] (v\i-\x) at (\x*1.5-0.5,-1.75*\i) {$x_ {\x}^{L}$};
						  }{

				        \ifthenelse{\x = 4}{
							      \node[vertex,draw] (v\i-\x) at (\x*1.5-0.5,-1.75*\i) {$x_ {\x}^{\i}$};
				        }{
				          \ifthenelse{\x = 5}{
							      \node[vertex,draw] (v\i-\x) at (\x*1.5-0.5,-1.75*\i) {$x_ {\x}^{\i}$};
                  }{
							      \node[vertex,draw] (v\i-\x) at (\x*1.5-0.5,-1.75*\i) {$x_ {\x}^{\i}$};
                  }
				        }
						  }
						}
					}
				}

       \draw (u) to (v1-1);
       \draw[bend left=15] (u) to (v1-2);
       \draw[bend left=15] (u) to (v1-3);
       \draw[bend left=15] (u) to (v1-4);
       \draw[bend left=15] (u) to (v1-5);



			\begin{pgfonlayer}{bg}
				\foreach \i in {1,...,4} {
					\foreach \x in {1,...,5} {
						\pgfmathtruncatemacro{\value}{\i - 1};	\let\j=\value
						\foreach \y in {1,...,5} {
							\draw[draw=black] (v\j-\x) -- (v\i-\y);
						}
					}
				}
		  \end{pgfonlayer}
			\end{scope}
		\end{tikzpicture}
	\end{center}
  \caption{The graph $G'$ (right) obtained from $G$ (left) after carrying out the modifications of \autoref{th:not-n-approx}. A thick edge represents an $q$-edge for some $q > 0$. A ``star'' vertex $v$ represents a vertex adjacent to $\frac{\deg_{G'}(v)}{2}$ pending-vertices.}
  \label{fig:reduc2}
\end{figure}
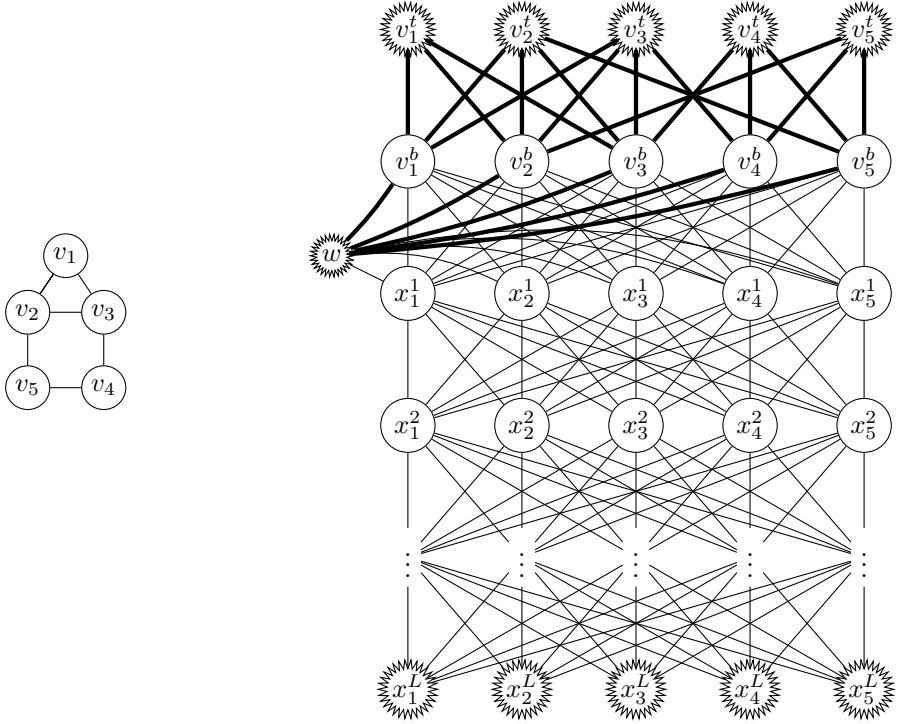

We claim that if $I$ is a \textit{yes}-instance then $opt(I') \geq n L\geq n^{\beta+1}$; otherwise $opt(I') < n^4$. Let $n' = |V'|$, notice that we have $n' \leq  n^4 + nL$.

Suppose that there exists a dominating set $S \subseteq V$ in $G$ of size at most $k$. Consider the solution $S'$ for $I'$ containing the corresponding top vertices and  vertex $w$. After the first round, all vertices belonging to the edge gadgets which top vertex is in $S'$ are activated. Since $S$ is a dominating set in $G$, after the second round,  all the bottom vertices are activated.
Indeed $\deg_{G'}(v^b)=2(n+(k+2)\deg_G(v))$ and after the first round $v^b$ has at least $k+2$ neighbors activated belonging to an $(k+2)$-edge between $v^b$ and  some $u^t\in V$ and $n+(k+2)(\deg_G(v)-1)$ neighbors activated belonging to an $n+(k+2)(\deg_G(v)-1)$-edge between $v^b$ and $w$.
 Thus, every vertex $x_{i}^1$ gets active after the third round, and generally after the $j$th  round, $j=4,\ldots,L+2$ the vertices $x_i^{j-2}$ are activated, and at the $(L+3)$th round all pending-vertices adjacent to $x_i^L$ are activated. Therefore, the size of an optimal solution is at least $nL\geq n^{\beta+1}$.

Suppose that there is no dominating set in $G$ of size  $k$. Without loss of generality, we may assume that no pending-vertices are in a solution of $I'$ since they all have threshold one. If $w$ does not take part of a solution in $I'$, then no vertex $x_i^1$ could be activated and in this case $opt(I')$ is less than $n' - nL \leq n^4$. Consider now the solutions of $I'$ of size $k+1$ that contain $w$. Observe that if a top-vertex $v^t$ gets active through bottom-vertices then $v^t$ can not activate any other bottom-vertices.
Indeed, as a contradiction, suppose that $v^t$ is adjacent to a non-activated bottom-vertex. It follows that $v^t$ could not have been activated because of its threshold and that no pending-vertices are part of the solution, a contradiction.
Notice also that it is not possible to activate a bottom vertex by selecting some $x_i^1$ vertices since of their threshold.
Moreover, since there is no dominating set of size $k$, any subset of $k$ top vertices cannot activate all bottom vertices, therefore no vertex $x_i^k$, $i=1,\ldots,n, k=1,\ldots, L$ can be activated. Hence, less than $n' - nL$ vertices can be activated in $G'$ and the size of an optimal solution is at most $n^4$.

Assume now that there is an fpt-time $n^{1-\eps}$-approximation algorithm $A$ for \maxclkplusuninfluence with majority threshold. Thus, if $I$ is a \textit{yes}-instance, the algorithm gives a solution of value $A(I') \geq \frac{n^{\beta+1}}{(n')^{1-\eps}} > \frac{n^{\beta+1}}{n^{(1-\eps)(\beta+5)}} = n^{4}$ since $n' \leq  n^4 + nL < n^5L$.
If $I$ is a \textit{no}-instance, the solution value is $A(I') < n^4$. Hence, the approximation algorithm $A$ can distinguish in fpt-time between \textit{yes}-instances and \textit{no}-instances for \ds implying that $\fpt=\wtwo$ since this last problem is $\wtwo$-hard~\cite{DF99}.
\qedfill
\end{proof}

\begin{theorem} \label{th:not-n-approx-cst}
For any $\eps \geq 0$, \maxclinfluence and \maxopinfluence with thresholds at most two cannot be approximated within $n^{1 - \eps}$ in fpt-time w.r.t. parameter~$k$ even on bipartite graphs, unless $\fpt=\wtwo$.
\end{theorem}

{
\begin{proof}
By \autoref{lem:op-implies-cl}, it suffices to prove the result for \maxclinfluence. We  construct a polynomial-time reduction from \ds to \maxclinfluence with thresholds at most two.
In this reduction, we will make use of the \emph{directed edge} gadget. A directed edge from a vertex $u$ to another vertex $v$ consists of a $4$-cycle $\{a,b,c,d\}$ such that $a$ and $u$ as well as~$c$ and $v$ are adjacent. Moreover $\thr(a) = \thr(b) = \thr(d) = 1$ and $\thr(c) = 2$.
The idea is that the vertices in the directed edge gadget become active if~$u$ is activated but not if~$v$ is activated.
Hence, the activation process may go from~$u$ to~$v$ via the gadget but not in the reverse direction.
In the rest of the proof, we may assume that no vertices from $\{a,b,c,d\}$ are part of a solution of \maxclinfluence.
Indeed, it is always as good to take the vertex $u$ instead.
We will also make use of a \emph{directed tree} with leaves $x_1,\ldots,x_n$ and root $r$ defined as follows: introduce $n-1$ new vertices $y_2,\ldots,y_n$ and insert a directed edge from $x_1$ to $y_2$, from $x_2$ to $y_2$, from $y_i$ to $y_{i+1}$ for $i=2,\ldots,n-1$, from $x_i$ to $y_i$ for $i=3,\ldots,n$, and from $y_n$ to $r$.
Moreover $\thr(y_i)=2$, $i=2,\ldots,n$ and $\thr(r)=1$. The idea is that the vertices in the directed tree become active if all vertices~$x_1,\ldots,x_n$ are activated but not if~$r$ is activated.
So, we may assume that no vertex from $y_2,\ldots,y_n$ is part of a solution of \maxclinfluence.

Given an instance $I=(G=(V,E),k)$ of \ds with $n = |V|$, we define an  instance $I'$ of \maxclinfluence. We start with the \emph{basic reduction} and modify $G'$ and the function $\thr$ as follows. Set the thresholds of top-vertices to two. Replace every edge between a top vertex $v^t$ and a bottom vertex $v^b$ by a directed edge from $v^t$ to $v^b$.
For $j=1,\ldots,n^{\beta}$, where~$\beta= \frac{4}{\eps} - 3$, add vertices $p_{1}^{j}, \ldots, p_{n}^{j}$ and a directed tree between leaves $v_i^b$, $i=1,\ldots,n$ and root  $p_{\ell}^{1}$, for $\ell=1,\ldots,n$. Moreover for $j=1,\ldots,n^{\beta}-1$ add directed trees between leaves $p_{1}^{j}, \ldots, p_{n}^{j}$ and root  $p_{\ell}^{j+1}$, for $\ell=1,\ldots,n$.
 This completes the construction (see \autoref{fig:not-n-approx-cst}).
Let $n' = |V'|$, notice that we have $n' = 2n + n^2n^{\beta} +4(2n-1)n^{\beta + 1}< n^{\beta + 3}$.

\begin{figure}[t]
	\begin{center}
		\begin{tikzpicture}[scale=1,transform shape]
			\begin{scope}[yshift=-3cm]
				\node[vertex,draw] (v5) at (0,0) {$v_5$};
				\node[vertex,draw] (v4) at (1,0) {$v_4$};
				\node[vertex,draw] (v3) at (1,1) {$v_3$};
				\node[vertex,draw] (v2) at (0,1) {$v_2$};
				\node[vertex,draw] (v1) at (0.5,1.75) {$v_1$};
				\draw (v1) -- (v2) -- (v3) -- (v4) -- (v5) -- (v2) -- (v1) -- (v3);
			\end{scope}
			\begin{scope}[xshift=4cm]
				\foreach \x in {1,...,5} {
					\node[vertex,draw] (vt\x) at (\x*1.5-0.5,1.75) {$v_ {\x}^t$};
					\node[vertex,draw] (v0-\x) at (\x*1.5-0.5,0) {$v_ {\x}^b$};
					\draw[thick,->] (vt\x) -- (v0-\x);
				}
				\foreach \x / \y in {1/2, 1/3, 2/3, 2/5, 3/4, 4/5} {
					\draw[thick,->] (vt\x) -- (v0-\y);
					\draw[thick,->] (vt\y) -- (v0-\x);
				}
				\foreach \i in {1,...,4} {
					\foreach \x in {1,...,5} {
						\ifthenelse{\i = 3}{
							\node[] (v\i-\x) at (\x*1.5-0.5,-1.75*\i) {$\vdots$};
						}{
						  \ifthenelse{\i = 4}{
							\node[vertex,draw] (v\i-\x) at (\x*1.5-0.5,-1.75*\i) {$p_ {\x}^{L}$};
						  }{
							  \node[vertex,draw] (v\i-\x) at (\x*1.5-0.5,-1.75*\i) {$p_ {\x}^{\i}$};
						  }
						}
					}
				}

			\begin{pgfonlayer}{bg}
				\foreach \i in {1,...,4} {
					\foreach \x in {1,...,5} {
						\pgfmathtruncatemacro{\value}{\i - 1};	\let\j=\value
						\foreach \y in {1,...,5} {
							\draw[draw=black!30,->] (v\j-\x) -- (v\i-\y);
						}
					}
				}
		  \end{pgfonlayer}
			\end{scope}
		\end{tikzpicture}
	\end{center}
	\caption{The graph $G'$ (right) obtained from $G$ (left) after carrying out the modifications of \autoref{th:not-n-approx-cst}. A black arrow from $u$ to $v$ represents a directed edge gadget from $u$ to $v$. A gray arrow from $u$ to $v$ indicates a directed tree where~$u$ is one of the leafs and~$v$ is the root.}
	\label{fig:not-n-approx-cst}
\end{figure}
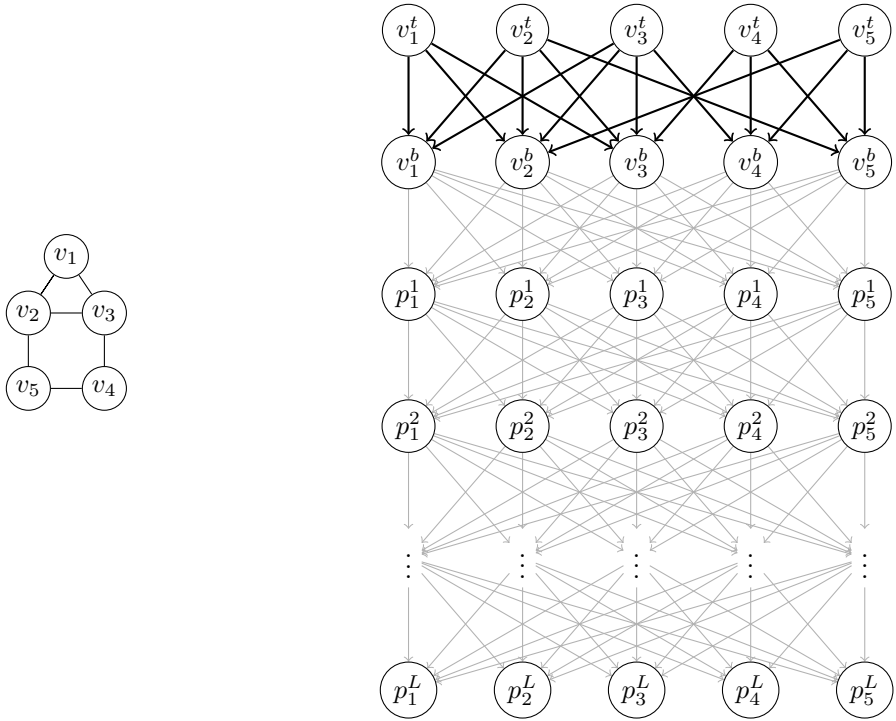

We claim that if $I$ is a \textit{yes}-instance then $opt(I') > n^{\beta + 2}$; otherwise $opt(I') < n^{3}$.

Suppose that there exists a dominating set $S \subseteq V$ in $G$ of size at most $k$. Consider the solution $S'$ for $I'$ containing the corresponding top vertices. Since $S$ is a dominating set in $G$, after the fourth round,  all the bottom vertices are activated.  It follows that at the end of the activation process all the vertices of the graph $G'$ are activated except the top vertices outside $S'$ and the vertices of some directed edges of the basic gadget. The optimum solution is $opt(I') > n'-5n^2 >n^{\beta + 2}$.

Suppose that there is no dominating set in $G$ of size  $k$. Consider a solution $S'$ for $I'$ of size $k$. Without loss of generality, we may assume that no  $p_{i}^{j}$ vertices or bottom vertices are contained in $S'$ since they all have threshold one. For the reason previously mentioned, we know that no vertices from the directed edge gadgets and no vertices from the directed trees are in $S'$. It follows that $S'$ only contains top-vertices. Since there is no dominating set of size $k$ in $G$ then at least one bottom-vertex is not activated.
Moreover, because of the directed edges the activated bottom-vertices cannot activate new top-vertices. Thus at least one vertex of each directed tree  with roots $p_i^1$, $i=1,\ldots,n$ cannot be activated implying that no  $p_{i}^{j}$ vertices can be activated. This leads to a solution of size at most $5n^2 < n^{3}$.

Assume now that there is an 
fpt-time $n^{1 - \eps}$-approximation algorithm $A$ for \maxclinfluence with threshold at most two. Thus, if $I$ is a \textit{yes}-instance, the algorithm gives a solution of value $A(I') \geq \frac{n^{\beta+2}}{(n')^{1-\eps}} > \frac{n^{\beta+2}}{n^{(1-\eps)(\beta+3)}} > n^{3}$ since $n' < n^{\beta +3}$.
If $I$ is a \textit{no}-instance, the solution value is $A(I') < n^{3}$. Hence, the approximation algorithm $A$ can distinguish in fpt-time between \textit{yes}-instances and \textit{no}-instances for \ds implying that $\fpt=\wtwo$ since this last problem is $\wtwo$-hard~\cite{DF99}.
\qedfill
\end{proof}
}

Using \autoref{lem:k-implies-n}, \autoref{th:not-n-approx}, and \autoref{th:not-n-approx-cst} we can deduce the following corollary.

\begin{corollary}
\sloppy For any strictly increasing function $r$, 	\maxclinfluence and  \maxopinfluence with thresholds at most two or majority thresholds cannot be approximated within~$r(k)$ in fpt-time w.r.t. parameter~$k$ unless $\fpt = \wtwo$.
\end{corollary}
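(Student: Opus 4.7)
The plan is to prove the corollary by contraposition, using \autoref{lem:k-implies-n} as the bridge between inapproximability in $n$ and inapproximability in $k$. Both \maxclinfluence and \maxopinfluence fit into Cai's framework of cardinality constrained optimization problems, with solution base $\mathcal{B} = V$, so $|\mathcal{B}| = n$ coincides with the number of vertices of the input graph. This means that \autoref{lem:k-implies-n} can be invoked with $|\mathcal{B}| = n$ appearing in the role of the input size used by the theorems of this section.

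Concretely, I would argue as follows. Suppose there exists a strictly increasing function $r$ and an fpt-time $r(k)$-approximation algorithm for \maxclinfluence (respectively \maxopinfluence) with majority thresholds. Fix some $\eps \in (0,1)$ and define $r_2(n) = n^{1-\eps}$, which is strictly increasing. By \autoref{lem:k-implies-n} applied with $r_1 = r$ and this choice of $r_2$, we obtain an $r_2(|\mathcal{B}|) = n^{1-\eps}$-approximation in fpt-time w.r.t.\ $k$. This directly contradicts \autoref{th:not-n-approx}, and therefore forces $\fpt = \wtwo$. The same argument with $r_2(n) = n^{1/2-\eps}$ for $\eps \in (0,\tfrac12)$ and \autoref{th:not-n-approx-cst} handles the case of thresholds at most two.

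The main subtlety to verify is simply that the two theorems of this section really are stated in the form required by the contrapositive of \autoref{lem:k-implies-n}: they rule out fpt-time $f(n)$-approximation for a strictly increasing function $f$, and the inapproximability holds for every $\eps$ in the stated range, so we are free to pick one $\eps$ (say $\eps = 1/2$ in \autoref{th:not-n-approx} or $\eps = 1/4$ in \autoref{th:not-n-approx-cst}) to instantiate $r_2$. No new reductions, gadgets, or calculations are needed; the only work is bookkeeping to confirm that the parameter in both theorems is the same $k$ used in the corollary, and that $|\mathcal{B}| = n$ is indeed the quantity appearing in the inapproximability ratios. I do not expect any real obstacle here.
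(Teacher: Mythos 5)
Your proposal is correct and matches the paper's (implicit) argument exactly: the corollary is stated as a direct consequence of \autoref{lem:k-implies-n} combined with \autoref{th:not-n-approx} and \autoref{th:not-n-approx-cst}, which is precisely the contrapositive instantiation you describe with $\mathcal{B}=V$ and $r_2(n)=n^{1-\eps}$ (resp.\ $n^{1/2-\eps}$). No further work is needed.
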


\section{Unanimity thresholds} \label{sec:unanimitythresholds}

In the previous section, we proved that the problem is parameterized inapproximable even for constant and majority thresholds. In this section, we show that assuming unanimity thresholds leads to more positive results. More precisely, we give a parameterized approximation algorithm on general graphs, and show that the problem is fixed-parameter tractable w.r.t.~$k$ for the class of graphs of bounded maximum degree.



\subsection{General graphs}

We first show that, in the unanimity case, \influence is $\wone$-hard w.r.t. parameter $k+\ell$ and \maxopinfluence is not approximable within $n^{1-\eps}$ for any $\eps >0$ in polynomial time, unless $\np=\zpp$.
However, if we are allowed to use fpt-time then \maxopinfluence with unanimity is $r(n)$-approximable in fpt-time w.r.t. parameter~$k$ for any strictly increasing function $r$.

\begin{theorem}\label{th:whardness-general}
\influence with unanimity thresholds is $\wone$-hard w.r.t. the combined parameter $(k,\ell)$ even for bipartite graphs.
\end{theorem}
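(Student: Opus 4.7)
The plan is to give a parameterized reduction from \clique (parameterized by the clique size $k$, which is $\wone$-hard by \cite{DF99}) to \influence with unanimity thresholds on bipartite graphs, parameterized by the combined $(k,\ell)$. Given an instance $(H = (V_H, E_H), k)$ of \clique, I construct a bipartite graph $G = (A \cup B, E')$ as follows: put $A := V_H$ and let $B$ consist of one \emph{edge-vertex} $x_e$ for every $e \in E_H$ together with a set $P_v$ of $p := k+1$ \emph{pendants} for every $v \in A$. In $G$, each $v \in A$ is joined to every vertex in $P_v$ and to $x_{\{v,u\}}$ for every $u \in N_H(v)$. Set $\thr(w) := \deg_G(w)$ for every vertex~$w$, and put $\ell := \binom{k}{2} + kp$. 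The graph is bipartite by construction, the reduction is polynomial, and $k + \ell = O(k^2)$ is bounded by a function of $k$, so this is a legitimate parameterized reduction.

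The key structural claim is that no vertex of $A \setminus S$ can ever become activated, provided $|S| \le k$. I would prove this by induction on the propagation round: if some $v \in A \setminus S$ were first activated at round $t$, then at round $t-1$ every pendant $x \in P_v$ would already be active. Since $\deg_G(x) = 1$ with $N_G(x) = \{v\}$, such an $x$ is either in $S$ or was activated at some round $s \le t-1$; but the latter would force $v$ to be active before round $s$, contradicting the minimality of $t$. Hence every one of the $p = k+1$ pendants of $v$ must lie in $S$, which is impossible because $|P_v| = k+1 > k \ge |S|$.

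This immediately yields $\sigma[S] \cap A = S \cap A =: S_A$, and then the only $B$-vertices that can be activated are those in $S$, the edge-vertices $x_e$ with $e \subseteq S_A$, and the pendants attached to vertices of $S_A$. Writing $a := |S_A|$ and $e(H[S_A])$ for the number of $H$-edges with both endpoints in $S_A$, a direct count gives
\[
|\sigma(S)| \;\le\; e(H[S_A]) + ap \;\le\; \binom{a}{2} + ap \;\le\; \binom{k}{2} + kp \;=\; \ell,
\]
where the last inequality uses monotonicity of $a \mapsto \binom{a}{2} + ap$ on $\{0,\dots,k\}$. Equality forces $a = k$, $S \subseteq A$, and $S_A$ to induce a $k$-clique in $H$. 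Conversely, if $H$ has a $k$-clique $C$, then $S := C$ activates all $\binom{k}{2}$ clique-edge-vertices and all $kp$ incident pendants in a single round, so $|\sigma(S)| = \ell$. This gives the required equivalence and the $\wone$-hardness for the combined parameter $(k,\ell)$.

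The main obstacle is certifying the pendant argument in full generality, ruling out any mixed strategy where $S$ distributes pendants and edge-vertices across $B$ in the hope of triggering a cascade that activates a new $A$-vertex. The strict inequality $p > k$ is precisely what blocks this, since any cascade into $A$ must eventually cover all pendants of some $A$-vertex, which is incompatible with $|S| \le k$. Once this "no backflow" lemma is in place, the rest is a routine monotonicity computation.
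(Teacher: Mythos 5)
Your proposal is correct and follows essentially the same route as the paper: a parameterized reduction from \clique with unanimity thresholds, auxiliary edge-vertices counted by $\ell$, and a multiplicity-$(k+1)$ gadget guaranteeing that no original vertex outside $S$ can ever be activated. The only difference is cosmetic --- the paper attaches $k+1$ parallel edge-vertices per edge of $H$ (so $\ell=(k+1)\binom{k}{2}$), whereas you keep single edge-vertices and instead attach $k+1$ pendants to each original vertex; both enforce the same ``no backflow'' property via $|S|\le k$.
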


{
\begin{proof}
We provide a parameterized reduction from the $\wone$-hard \clique problem \cite{DF99} to \influence.
Given an instance $(G=(V,E),k)$ of \clique, we construct an instance $(G'=(V',E'),k, \ell)$ of \influence as follows. For each vertex $v \in V$ add a copy $v'$ to $V'$. For each edge $uv \in E$, add $k+1$ edge-vertices $e_{uv}^1, \ldots, e_{uv}^{k+1}$ adjacent to both $u'$ and $v'$. Set $\ell = (k+1){k \choose 2}$ and $\thr(u) = \deg_{G'}(u)$ for all $u \in V'$ (see also \autoref{fig:clique}).

\begin{figure}
\centering
\begin{tikzpicture}[minimum size=0.7cm, scale=1]
\node[vertex] (v1) at (0,0) {$v_1$};
\node[vertex] (v2) at (-1,-0.5) {$v_2$};
\node[vertex] (v3) at (0,-1) {$v_3$};
\draw (v1) -- (v2) -- (v3);
\node () at (-0.5,-2) {$G$};
\begin{scope}[xshift=3cm]
\node[vertex] (v'1) at (0,0) {$v'_1$};
\node[vertex] (v'2) at (0,-1) {$v'_2$};
\node[vertex] (v'3) at (0,-2) {$v'_3$};
\foreach \k in {1,2,3} {
	\node[vertex] (e12\k) at (1+\k,-0.1) {$e_{12}^{\k}$};
	\node[vertex] (e23\k) at (1+\k,-1.9) {$e_{23}^{\k}$};
}
	\draw (v'1) -- (e121) -- (v'2);
	\draw (v'1) edge[bend left=24] (e122) (e122) edge[bend left=9] (v'2);
	\draw (v'1) edge[bend left=24] (e123) (e123) edge[bend left=12] (v'2);
	\draw (v'2) -- (e231) -- (v'3);
	\draw (v'2) edge[bend left=8] (e232) (e232) edge[bend left=24] (v'3);
	\draw (v'2) edge[bend left=11] (e233) (e233) edge[bend left=27] (v'3);

\node () at (2,-3) {$G'$};
\end{scope}
\end{tikzpicture}
\caption{Illustration of the reduction from an instance $(G,k)$ of \clique to an instance $(G',k,\ell)$ of \influence, where $k=2$ and $\ell = 3$. }\label{fig:clique}
\end{figure}
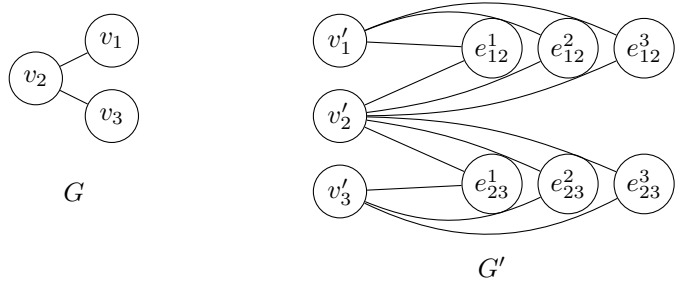

We claim that there is a clique of size $k$ in $G$ if and only if there exists a subset $S \subseteq V'$ of size $k$ such that $|\sigma(S)| \geq \ell$.


	``$\Rightarrow$'': Assume that there is a clique $C \subseteq V$ of size $k$ in $G$.  One can easily verify that the set $S = \{v' \in V': v \in C\}$ activates $|\sigma(S)| \geq (k+1){k \choose 2} = \ell$ edge-vertices in $G'$ since $C$ is clique.
	
	``$\Leftarrow$'': Suppose that there exists a subset $S \subseteq V'$ of size $k$ such that $|\sigma(S)| \geq \ell$. We may assume without loss of generality that no edge-vertices belong to $S$. Indeed, each edge-vertex is adjacent to only vertices with threshold at least $k+1$. Thus choosing some edge-vertices to $S$ cannot activate any new vertices in $G'$. Since the solution $S$ activates at least $(k+1){k \choose 2}$ edge-vertices, this implies that $S$ is a clique in $G$.
\qedfill
\end{proof}
}

\begin{theorem} \label{th:noapprox-unanimity}
For any $\eps >0$, \maxopinfluence with unanimity thresholds cannot be approximated within $n^{1-\eps}$ in polynomial time, unless $\np=\zpp$.
\label{th: }
\end{theorem}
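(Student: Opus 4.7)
The plan is an approximation-preserving reduction from \maxis, which by H\aa stad's theorem is not polynomial-time approximable within~$n^{1-\eps}$ unless $\np=\zpp$.

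\smallskip\noindent\emph{One-round collapse under unanimity.} First I would prove that for every $S\subseteq V$,
$\sigma(S)=\{v\in V\setminus S : N(v)\subseteq S\}$.
Indeed, suppose for contradiction that some $v\notin S$ first becomes active at round $t\geq 2$. Then each $u\in N(v)$ is active by round $t-1$; if such a $u$ lies outside $S$, its own first-activation round $t'\in\{1,\dots,t-1\}$ forces $N(u)$ to be entirely active at round $t'-1$ (by unanimity), and since $v\in N(u)$ this puts $v$ active strictly before round~$t$, contradicting the minimality of~$t$. Hence every neighbor of $v$ must already lie in $S$. The same argument immediately yields that $\sigma(S)$ is always an \emph{independent set} of $G$: two adjacent $u,v\in\sigma(S)$ would give $u\in N(v)\subseteq S$ while $u\notin S$, impossible.

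\smallskip\noindent\emph{Reduction.} Given an arbitrary \maxis instance $H=(V_H,E_H)$, I would produce the \maxopinfluence instance $(G',\thr,k)$ with $G':=H$, unanimity thresholds $\thr(v):=\deg_H(v)$, and budget $k:=|V_H|$. This is clearly polynomial, and $n:=|V(G')|=|V_H|$.

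\smallskip\noindent\emph{Optimum matches $\alpha(H)$.} I would then show $opt(G',k)=\alpha(H)$. The upper bound $opt\leq\alpha(H)$ is immediate from the independence of $\sigma(S)$. For the lower bound, pick a maximum independent set $I^\star\subseteq V_H$ and set $S:=N_H(I^\star)$: independence yields $S\cap I^\star=\emptyset$, trivially $|S|\leq|V_H|=k$, and every $v\in I^\star$ satisfies $N(v)\subseteq N(I^\star)=S$, so $I^\star\subseteq\sigma(S)$ and therefore $|\sigma(S)|\geq\alpha(H)$.

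\smallskip\noindent\emph{Gap preservation and conclusion.} A polynomial-time $n^{1-\eps}$-approximation $\mathcal{A}$ for \maxopinfluence would return some $S$ with $|\sigma(S)|\geq\alpha(H)/n^{1-\eps}$. Because $\sigma(S)$ is a bona fide independent set of~$H$ (by the one-round collapse), this would be an $n^{1-\eps}$-approximate maximum independent set of~$H$, contradicting H\aa stad's theorem unless $\np=\zpp$. The entire argument hinges on the one-round collapse of the first step; once that is in hand, \maxopinfluence under unanimity literally contains \maxis as the special case $k=n$, and the lower bound transfers verbatim.
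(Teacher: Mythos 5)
Your proof is correct and follows essentially the same route as the paper: both rest on the observation that under unanimity the propagation stops after one round and $\sigma(S)$ is an independent set, and both transfer H\aa stad's $n^{1-\eps}$ lower bound for \maxis. The only cosmetic difference is that you fix the budget $k=n$ and argue $opt=\alpha(H)$ directly, whereas the paper runs the approximation algorithm for every $k\in\{1,\dots,n\}$ and keeps the best solution; both handle the budget equally well.
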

{
\begin{proof}
We will show how to transform any approximation algorithm for \maxopinfluence into another one with the same ratio for \maxis. 
Consider the instance $I_k$ of \maxopinfluence consisting of a graph $G=(V,E)$, an integer $k$ and unanimity threshold. One can note and easily check that the following holds. Given a solution $S \subseteq V$ of
$I_k$, $\sigma(S)$ is obtained in only one step of the diffusion process and is an independent set.
Therefore there exists an integer $k^{*} \in [1,n]$ such that $\sigma(OPT(I_{k^{*}}))$ is the maximum independent set in $G$, where $OPT(I_{k^{*}})$ is the optimal solution of $I_{k^{*}}$.

Suppose that \maxopinfluence has an $f(n)$-approximation algorithm $A$, we then have
$|\sigma(A(I_{k^{*}}))| \geq \frac{|\sigma(OPT(I_{k^{*}}))|}{f(n)}$, where $A(I_{k^{*}})$ is a solution given by $A$ for the instance $I_{k^{*}}$.
It follows from the previous observation that $\sigma(A(I_{k^{*}}))$ is an independent set in $G$ and an $f(n)$-approximate solution.

Now, it suffices to apply the approximation algorithm $A$ for each $k=1,\ldots,n$
and return the approximate solution $S_{\Smax}$ that has the largest value. Given this solution, we have $|\sigma(S_{\Smax})| \geq  |\sigma(A(I_{k^{*}}))|$.
Hence, we get a polynomial-time $f(n)$-approximation algorithm for \maxis problem. Since \maxis cannot be approximated within $n^{1-\eps}$ for any $\eps >0 $ unless $\np=\zpp$~\cite{hastad96}, the result follows.
\qedfill
\end{proof}
}


In what follows, we provide an fpt-time $r(n)$-approximation algorithm w.r.t. parameter~$k$ for any strictly increasing function~$r$. As a first step toward this goal, we need the following result.

\begin{theorem} \label{th:op-2k-approx}
\maxopinfluence  and \maxclinfluence   with unanimity thresholds are \sloppy \mbox{$2^{k}$-approximable} in polynomial time.
\end{theorem}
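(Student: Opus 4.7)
The plan is to exploit a strong structural simplification of the propagation under unanimity thresholds, namely that it collapses to a single round. I would first show that $\sigma[S]\setminus S$ is an independent set in $G$: if $u,v\in\sigma[S]\setminus S$ were adjacent, then whichever of the two got activated strictly earlier would need its entire neighborhood (in particular the other one) to already be active, a contradiction. Consequently every newly activated vertex has its whole neighborhood inside $S$, so the propagation terminates in one round and
$$\sigma[S]\ =\ S\,\cup\,\{v\notin S:\ N(v)\subseteq S\}.$$
By \autoref{lem:op-implies-cl} it suffices to design a polynomial-time $2^k$-approximation for \maxopinfluence, so from now on I focus on maximizing $|\sigma(S)|=|\{v\notin S:N(v)\subseteq S\}|$ over $|S|\le k$.

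Grouping captured vertices by their common neighborhood gives
$$|\sigma(S)|\ \le\ \sum_{T\subseteq S}c(T),\qquad c(T):=|\{v\in V:N(v)=T\}|,$$
i.e.\ the size of the false-twin class whose members all have neighborhood $T$. Since every $S$ with $|S|\le k$ has at most $2^k$ subsets, setting $c^*:=\max_{|T|\le k}c(T)$ I obtain $opt(I)\le 2^k\cdot c^*$. The value $c^*$ is attained at the neighborhood $N(v^*)$ of some vertex $v^*$ of degree at most $k$, and it can be computed in polynomial time by bucketing all such vertices according to their neighborhood.

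The algorithm then simply returns $S:=N(v^*)$, which has $|S|=\deg(v^*)\le k$. Because the input graph is simple, every twin $u$ of $v^*$ satisfies $u\notin N(u)=N(v^*)=S$, so $u$ lies outside the seed set while having its entire neighborhood contained in $S$; hence each of the $c^*$ twins is captured and $|\sigma(S)|\ge c^*$. Combined with the upper bound this gives the ratio $opt(I)/|\sigma(S)|\le 2^k$, and \autoref{lem:op-implies-cl} transfers the bound to \maxclinfluence. The one subtle point is really the independence claim that forces the one-round collapse; once that is in place, the factor $2^k$ arises mechanically as the number of subsets of a size-$k$ seed. A degenerate case worth noting is $c^*=0$ (no vertex has degree at most $k$), where no vertex can ever be captured, so $opt(I)=0$ for the open version and any $S$ is trivially optimal.
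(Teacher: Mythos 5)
Your proof is correct and takes essentially the same route as the paper's: the paper's algorithm likewise finds the largest false-twin class $F$ of degree at most $k$ (your class attaining $c^*$), returns $N(F)$, and bounds the optimum by $2^k\cdot|F|$ via the same count of at most $2^k$ neighborhood-subsets of a size-$k$ seed set. You merely spell out the one-round collapse and independence of $\sigma(S)$ under unanimity, which the paper uses implicitly (it is stated explicitly only in the proof of Theorem~\ref{th:noapprox-unanimity}).
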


{
\begin{proof} By \autoref{lem:op-implies-cl}, it suffices to show the result for \maxopinfluence.
The polynomial-time algorithm consists in the following two steps: (i)~Find $F$, the largest ``false-twins set'' such that $\deg(v) \leq k$, $\forall v \in F$, and (ii)~Return $N(F)$.
The first step can be done for example  by searching for the largest set of identical lines with at most $k$ ones in the adjacency matrix of the graph. Since $F$ is a false-twins set with vertices of degree at most $k$, the size of the neighborhood of $F$ is also bounded by $k$. Consider the activation of the set $N(F)$. After one round, this will activate $|\sigma(N(F))| \geq |F|$ vertices, since all the neighborhood of the vertices in $F$ are activated. 

To complete the proof, observe that for any solution of size at most $k$, there is at most $2^k$ different false-twins sets. Therefore, any optimal solution could activate at most $2^{k}\cdot |F|$ vertices, providing the claimed approximation ratio.
\qedfill
\end{proof}
}

Using \autoref{lem:k-implies-n} and \autoref{th:op-2k-approx}  we directly get the following.

\begin{corollary} \label{cor:op-rn-approx}
For any strictly increasing function $r$, \maxopinfluence and \maxclinfluence with unanimity thresholds are $r(n)$-approximable in fpt-time w.r.t. parameter~$k$.
\end{corollary}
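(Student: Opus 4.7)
The plan is to obtain the corollary as a direct composition of the two previously established results: the polynomial-time $2^k$-approximation given by \autoref{th:op-2k-approx} and the ``$k$-ratio implies $n$-ratio in fpt-time'' transfer stated in \autoref{lem:k-implies-n}. First I would observe that \maxopinfluence and \maxclinfluence fit the cardinality-constrained framework $A = (\mathcal{B}, \Phi, k, \max)$ used in \autoref{lem:k-implies-n}, with solution base $\mathcal{B} = V$ (so $|\mathcal{B}| = n$), objective $\Phi(S) = |\sigma(S)|$ (respectively $|\sigma[S]|$), and the prescribed cardinality $k$.

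Next I would note that by \autoref{th:op-2k-approx} both problems admit a polynomial-time $2^{k}$-approximation, and the function $r_{1}(k) := 2^{k}$ is strictly increasing in $k$. In particular, polynomial-time approximability is a special case of fpt-time approximability, so the hypothesis of \autoref{lem:k-implies-n} is met.

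Finally, given any strictly increasing function $r$, I would apply \autoref{lem:k-implies-n} with $r_{1}(k) = 2^{k}$ and $r_{2} = r$. The lemma then directly yields an fpt-time $r(|\mathcal{B}|) = r(n)$-approximation w.r.t.\ parameter $k$ for both \maxopinfluence and \maxclinfluence under unanimity thresholds, which is exactly the claim. There is essentially no obstacle here: the entire content is already encapsulated in \autoref{lem:k-implies-n}, which handles the case distinction between ``$k$ small enough that the $2^{k}$-approximation already certifies ratio $r(n)$'' and ``$k$ so large compared to $n$ that brute-force enumeration of $k$-subsets of $V$ runs in fpt-time.''
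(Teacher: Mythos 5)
Your proposal is correct and matches the paper's own justification: the corollary is obtained exactly by combining the polynomial-time $2^k$-approximation of \autoref{th:op-2k-approx} (viewing both problems as cardinality-constrained with solution base $\mathcal{B}=V$, so $|\mathcal{B}|=n$) with the transfer argument of \autoref{lem:k-implies-n} applied to $r_1(k)=2^k$ and $r_2=r$. No further comment is needed.
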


For example, \maxopinfluence is $\log(n)$-approximable in time $O^{*}(2^{k 2^k})$, where the~$O^*$ notation suppresses polynomial factors.

\paragraph{Finding dense subgraphs} In the following we show that \maxopinfluence with unanimity thresholds is at least as difficult to approximate as the \densestksubgraph   problem, that consists of finding  in a graph a subset of vertices of cardinality $k$ that induces a maximum number of edges. In particular, any positive approximation result for \maxopinfluence with unanimity would directly transfers to \densestksubgraph. 
This last problem has no polynomial-time approximation scheme unless $\np$ has no subexponential-time algorithms~\cite{Kho06} and is $O(n^{\frac{1}{4}+\epsilon})$-approximable in time $n^{O(\frac{1}{\epsilon})}$ where $n$ is the size of the input graph~\cite{BCCFV10}. 

%
%
%
%

\begin{theorem} \label{th:DkS-fpt-approx}
For any strictly increasing function $r$,  if \maxopinfluence with unanimity thresholds is $r(n)$-approximable in fpt-time w.r.t. parameter~$k$ then  \densestksubgraph is $r(n)$-approximable in fpt-time w.r.t. parameter~$k$. 
\end{theorem}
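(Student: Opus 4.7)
The plan is to reduce \densestksubgraph to \maxopinfluence with unanimity thresholds, adapting the gadget used in the proof of \autoref{th:whardness-general}. Given a \densestksubgraph instance $(G = (V, E), k)$ with $|V| = n$, I would build $G' = (V', E')$ by taking a copy $v'$ of each $v \in V$ and, for every edge $\{u, v\} \in E$, inserting $k+1$ new \emph{edge-vertices} $e_{uv}^1, \ldots, e_{uv}^{k+1}$, each adjacent to both $u'$ and $v'$; every vertex is then assigned its unanimity threshold.

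The central claim to establish is that the optimal \maxopinfluence value on $(G', k)$ equals $(k+1)$ times the optimal \densestksubgraph value on $(G, k)$. One direction is immediate: picking $S = \{v' : v \in C\}$ for a densest $k$-subgraph $C$ activates in a single round exactly the $(k+1) \cdot e(G[C])$ edge-vertices whose two endpoints lie in $C$, and nothing else. The more delicate direction rests on the key observation that no original vertex $w'$ outside $S$ can ever become active: its threshold $(k+1)\deg_G(w)$ strictly exceeds $|S| = k$, while each of its neighbors is an edge-vertex that itself requires $w'$ to already be on before it can switch on, producing a dependency cycle that cannot be broken without placing $w'$ directly in $S$. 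Hence, writing $C_S := \{v : v' \in S\}$ for the ``original-vertex part'' of any feasible $S$, the activation count is at most $(k+1) \cdot e(G[C_S])$, matching the desired upper bound.

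Transferring an approximate solution back to \densestksubgraph is then essentially free: from a $\rho$-approximate solution $S$ for \maxopinfluence on $G'$, padding $C_S$ arbitrarily to a $k$-subset $C \subseteq V$ yields $e(G[C]) \geq e(G[C_S]) \geq |\sigma(S)|/(k+1) \geq \mathrm{opt}_{\mathrm{DkS}}/\rho$, so $C$ is a $\rho$-approximate solution for \densestksubgraph. The main obstacle I expect lies not in the reduction itself but in matching the approximation ratios: the construction blows up the instance size to $|V'| \leq n + (k+1)\binom{n}{2} = O(k n^2)$, so a literal application of an $r(n)$-approximation for \maxopinfluence delivers only ratio $r(|V'|)$ when re-expressed in terms of the \densestksubgraph input. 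To recover exactly $r(n)$, I would invoke the hypothesis with a compensating strictly increasing function $r^*$ satisfying $r^*(|V'|) \leq r(n)$, for instance $r^*(x) \approx r(x^{1/3})$, with the $k$-dependent factor absorbed into the fpt running time (this is consistent with \autoref{cor:op-rn-approx}, which already grants the hypothesis for every strictly increasing function). Once this calibration is in place, combining the reduction with the resulting $r^*$-approximation yields the claimed fpt-time $r(n)$-approximation for \densestksubgraph.
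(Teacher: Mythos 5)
Your reduction is essentially the one the paper uses: a value-preserving reduction from \densestksubgraph to \maxopinfluence in which unanimity thresholds make the original-vertex copies unactivatable, so the activated set is exactly the set of edge-gadget vertices whose both endpoints were selected. The paper's gadget differs only cosmetically: it uses a \emph{single} edge-vertex $e_{uv}$ per edge (so $opt(I')=opt(I)$ with no factor $k+1$) and instead adds $k+1$ guard vertices $x_1,\dots,x_{k+1}$ adjacent to every copy $v'$; since each $x_i$ has threshold $n$ and each $v'$ needs all of $x_1,\dots,x_{k+1}$, no copy outside $S$ is ever activated. Your explicit calibration of the ratio against the instance blow-up ($|V'|=O(kn^2)$, compose $r$ with a polynomial) is a legitimate point that the paper handles only implicitly through the $E$-reduction formalism; since the hypothesis is quantified over \emph{all} strictly increasing $r$, this step is sound.

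There is one genuine (though easily repaired) gap. Your upper bound $|\sigma(S)|\le(k+1)\,e(G[C_S])$ fails when $G$ has isolated vertices: an isolated $w\in V$ yields a copy $w'$ of degree $0$, hence threshold $0$, which is activated for free in the first round. On an instance consisting of one edge plus many isolated vertices, $opt(I')$ is dominated by these free activations, and an approximate solution for \maxopinfluence can have $C_S=\emptyset$ while still meeting the ratio, so the transferred set $C$ carries no guarantee for \densestksubgraph. The paper's guard vertices $x_i$ give every $v'$ positive degree and so avoid this degeneracy automatically; in your construction you must either preprocess the \densestksubgraph instance by deleting isolated vertices (they are never useful when some non-isolated vertex is available, but the boundary case where fewer than $k$ non-isolated vertices remain needs a sentence) or add a similar guard gadget. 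With that fix, your argument goes through.
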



\begin{proof}
We give an $E$-reduction from \densestksubgraph to \maxopinfluence. 
Consider an instance $I$ of \densestksubgraph  formed by a graph $G=(V,E)$ and we construct an instance $I'$ of \maxopinfluence with unanimity thresholds consisting of graph $G'=(V', E')$ as follows: for each vertex $v\in V$ add a copy $v'$ to $V'$; for each edge $uv \in E$ add an edge-vertex $e_{uv}$ to $V'$; moreover add $k+1$ vertices $x_1,\ldots,x_{k+1}$. For any edge $uv \in E$ add edges  $u'e_{uv}, e_{uv}v'$ to $E'$. Furthermore, add an edge between $x_i$ and $v'$ for any $1\leq i\leq k+1$ and any $v' \in V'$. 
Therefore, every vertex $x_i$ has degree $|V|$, every vertex $v'$ has degree $deg_G(v)+k+1$ and every edge-vertex $e$ has degree $2$.

Let  $S\subseteq V$, $|S|=k$ be an optimum solution for $I$ that is  $opt(I)$ is the number of edges induced by $S$. The set  $S'=\{v' :v \in S\}$ is such that $|\sigma(S')|=opt(I)$ since no $x$ vertex will be activated. Thus $opt(I')\geq opt(I)$.

Given any solution $S'\subseteq V'$ of size $k$, we can consider that $S'$ contains only vertices of type $v'$ such that $v\in V$.
Indeed, observe that no $v'$ and $x_i$ vertices are activated by propagation because their threshold is greater than~$k$ and there is only one step of propagation. So only edge-vertices can be activated by propagation. Therefore, it is more interesting to consider only solutions containing $v'$ vertices.  
Thus the set $S=\{v : v' \in S'\}$ has value $val(S)=val(S')$. Moreover if $S'$ is optimal, then $opt(I)\geq opt(I')$ and thus $opt(I)= opt(I')$. Therefore, we have $\varepsilon(I,S)=\varepsilon(I',S')$.
\qedfill
\end{proof}

Using \autoref{th:DkS-fpt-approx} and \autoref{cor:op-rn-approx}, we have the following corollary, independently established in \cite{Bourgeois2013}.

\begin{corollary}
	For any strictly increasing function $r$,  \densestksubgraph  is $r(n)$-approximable in fpt-time w.r.t. parameter~$k$.
\end{corollary}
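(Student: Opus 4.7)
The plan is to obtain the corollary as a direct consequence of chaining Corollary~\ref{cor:op-rn-approx} with the $E$-reduction from Theorem~\ref{th:DkS-fpt-approx}. Fix an arbitrary strictly increasing function $r$ and an instance $I = (G, k)$ of \densestksubgraph with $|V(G)| = n$.

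First, I would apply the $E$-reduction of Theorem~\ref{th:DkS-fpt-approx} to $I$ to produce in polynomial time an instance $I'$ of \maxopinfluence with unanimity thresholds, keeping the same parameter $k$. Inspecting the construction in the proof of Theorem~\ref{th:DkS-fpt-approx}, the graph $G' = (V',E')$ has $|V'| = n + |E(G)| + k + 1$, hence $|V'| \le n^c$ for some absolute constant $c$ (say $c = 3$) once $n$ is large enough. The crucial property of this reduction is that it preserves the error exactly: any feasible solution $S'$ of $I'$ converts in polynomial time to a feasible solution $S$ of $I$ with $\varepsilon(I,S) = \varepsilon(I',S')$, so the approximation ratios coincide.

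Second, to defeat the polynomial blow-up $|V'| \le n^c$, I would invoke Corollary~\ref{cor:op-rn-approx} not with $r$ itself but with the auxiliary strictly increasing function $r_0(m) := r(\lceil m^{1/c} \rceil)$. The corollary guarantees an algorithm that, in time $f(k)\cdot |V'|^{O(1)}$ for some computable $f$, returns a solution $S'$ of $I'$ of approximation ratio at most $r_0(|V'|) \le r_0(n^c) \le r(n)$. Transferring $S'$ through the $E$-reduction yields a solution $S$ of $I$ with ratio at most $r(n)$, and the total running time remains $f(k) \cdot n^{O(1)}$ since $|V'|$ is polynomial in $n$.

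The only genuinely delicate point is the one I just addressed: the size argument of Corollary~\ref{cor:op-rn-approx} refers to $|V'|$ rather than $n$, and one must absorb the polynomial overhead by pre-composing $r$ with the $c$-th root. Everything else is immediate from the two previously established results, so no further work is required.
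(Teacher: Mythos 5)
Your proposal is correct and follows the same route as the paper, which obtains the corollary by simply chaining Theorem~\ref{th:DkS-fpt-approx} (the $E$-reduction from \densestksubgraph to \maxopinfluence with unanimity, preserving $k$ and the error) with Corollary~\ref{cor:op-rn-approx}. Your explicit pre-composition of $r$ with a $c$-th root to absorb the polynomial size blow-up is a detail the paper leaves implicit in the $E$-reduction framework; just make sure the auxiliary function you feed to Corollary~\ref{cor:op-rn-approx} is genuinely strictly increasing (e.g.\ take $r_0(x)=r(x^{1/c})$ over the reals rather than composing with a ceiling).
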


\subsection{Bounded degree graphs and regular graphs}
While \maxopinfluence and \maxclinfluence are not at all approximable in polynomial time on general graphs, we show in the following that they are constant approximable in polynomial time on bounded degree graphs. 
Moreover, \maxclinfluence  and then \maxopinfluence have no polynomial-time approximation scheme even on 3-regular graphs if $\p \neq \np$. 
From the parameterized complexity point of view, we show that \influence becomes fixed-parameter tractable w.r.t. parameter $k$ on bounded degree graphs.


\begin{lemma} \label{lem:boundedDegAffect}
	\maxopinfluence  and  \maxclinfluence  with unanimity thresholds on bounded degree graphs are constant approximable in polynomial time.
\end{lemma}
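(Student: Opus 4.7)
The starting point is the following structural fact for unanimity thresholds: $\sigma[S] = S \cup \{v \notin S : N(v) \subseteq S\}$, i.e.\ the propagation stabilises after the first round. The proof is a short induction on rounds: if some $v \notin S$ were activated in round $t \ge 2$, then at least one neighbor $u \in N(v) \setminus S$ of $v$ would have had to activate earlier, which forces $N(u) \subseteq S$; but then $v \in N(u) \subseteq S$, contradicting $v \notin S$. As a direct corollary $\sigma[S^*] \subseteq N[S^*]$, so $|\sigma[S^*]| \le (d+1)k$ and $|\sigma(S^*)| \le dk$, where $d$ is the degree bound.

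For \maxclinfluence this already settles matters: a trivial algorithm returning an arbitrary $k$-subset $S$ satisfies $|\sigma[S]| \ge |S| = k \ge |\sigma[S^*]|/(d+1)$, so it is a $(d+1)$-approximation running in polynomial time.

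For \maxopinfluence, the plan is a polynomial-time enumerative/greedy algorithm. Since activating a vertex $v$ requires $N(v) \subseteq S$ and $|N(v)| \le d$, we iterate over every subset $T \subseteq V$ of size at most $d$ (only $O(n^d)$ such, polynomial for fixed $d$), treat $T$ as a candidate seed ``core'', and pad it greedily up to size $k$ by repeatedly adding the vertex that maximises the number of newly activated vertices; we output the best candidate found. The approximation analysis leverages that, in bounded-degree graphs, each false-twin class of $\sigma(S^*)$ --- a maximal group of vertices sharing the same neighborhood $N$, all lying in $\bigcap_{w \in N} N(w)$ --- has size at most $d$; combining this with an averaging/pigeonhole argument over the twin blocks of $\sigma(S^*)$ shows that some enumerated core, once greedily padded, recovers a constant (in $d$) fraction of $|\sigma(S^*)|$.

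The main obstacle lies in making the open-version analysis quantitative. A naive single-vertex enumeration ($|T|=\deg(v)$) only guarantees $|\sigma(S)| \ge 1$ and hence a linear-in-$k$ ratio, so the plan crucially relies on enumerating over \emph{all} small cores and on the bounded twin-class structure. An additional subtlety is that the greedy padding must avoid inadvertently absorbing members of the twin class we are trying to open-activate, since including such a vertex in $S$ moves it from the open count into the seed set and would ruin the intended gain.
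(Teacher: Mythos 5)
Your structural fact (one-round stabilization under unanimity), the upper bound $|\sigma(S^*)|\le \Delta k$, and your treatment of \maxclinfluence are all correct; indeed your closed-version argument (any $k$-subset already gives $|\sigma[S]|\ge k\ge|\sigma[S^*]|/(\Delta+1)$) is a clean alternative to the paper, which instead derives the closed case from the open case via \autoref{lem:op-implies-cl}. The genuine gap is in \maxopinfluence, which is the substantive half of the lemma: your algorithm (enumerate all $O(n^{\Delta})$ candidate cores, pad greedily, analyze via false-twin classes and pigeonhole) is only a plan, and you yourself flag that the quantitative step --- showing that some enumerated core recovers a constant fraction of $|\sigma(S^*)|$ --- is missing. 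As written there is no proof of any constant ratio for the open version, and it is not at all clear that the twin-class/pigeonhole route closes it.

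The missing idea is much simpler and is the entire content of the paper's proof: do not try to recover a constant fraction of the optimum's structure; instead build a solution of guaranteed value $\lfloor k/\Delta\rfloor$ from scratch. Greedily pick vertices $v_1,v_2,\dots$ with pairwise disjoint neighborhoods and put all of $N(v_i)$ into $S$; each $v_i$ costs at most $\Delta$ units of budget and is open-activated after one round, so you activate at least $\lfloor k/\Delta\rfloor$ vertices (edge cases where fewer disjoint neighborhoods exist are exactly the cases where the optimum is correspondingly small). Combined with the upper bound $opt\le \Delta k$ this yields a ratio of roughly $\Delta^2$, a constant on bounded-degree graphs. Your worry that ``a naive single-vertex enumeration only guarantees $|\sigma(S)|\ge 1$'' is precisely what this repeated, disjoint selection fixes; no enumeration over all size-at-most-$\Delta$ cores and no twin-class argument is needed.
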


{
\begin{proof} By \autoref{lem:op-implies-cl}, it suffices to show the result for \maxopinfluence.	Indeed on graphs of degree bounded by $\Delta$, the optimum is bounded by $k \cdot \Delta$ and we can construct in polynomial time a solution $S$ of value at least $\lfloor \frac{k}{\Delta} \rfloor$ by considering iteratively vertices with disjoint neighborhoods  and putting their neighbors  in $S$.
\qedfill
\end{proof}
}

\begin{theorem} \label{th:no-ptas-regular}
\maxopinfluence  and  \maxclinfluence with unanimity thresholds have no polynomial-time approximation scheme even on 3-regular graphs for $k=\theta(n)$, unless $\p=\np$.
\end{theorem}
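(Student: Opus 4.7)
The plan is to give a gap-preserving (PTAS-) reduction from \textsc{Minimum Vertex Cover} on 3-regular graphs, which is APX-hard (Alimonti--Kann), to the two influence problems on 3-regular graphs under unanimity thresholds. The key fact, already used in the proof of \autoref{th:noapprox-unanimity}, is that with unanimity the newly-activated set $I(S) := \sigma[S]\setminus S$ is produced in a single propagation round and hence is an \emph{independent set} of $G$. Consequently $V\setminus I(S)$ is a vertex cover of $G$ for \emph{every} choice of seed $S$. Moreover, any minimum vertex cover $S^*$ of size $\beta := \beta(G)$ satisfies $\sigma[S^*] = V$, so at $k = \beta$ the \maxclinfluence optimum equals $n$ while the \maxopinfluence optimum equals $n-\beta$.

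The reduction, given a 3-regular graph $G$ on $n$ vertices, runs the hypothetical PTAS on $(G,k)$ for every $k \in \{1,\dots,n\}$, extracts the vertex cover $V\setminus I(S_k)$ from each returned solution $S_k$, and outputs the smallest cover found. Since $G$ has $3n/2$ edges and every vertex covers at most $3$, we have $\beta \geq n/2$, so $k = \beta = \Theta(n)$, as required by the statement. At this $k$, a $(1+\eps)$-approximation for \maxclinfluence yields $|\sigma[S_\beta]| \geq n/(1+\eps)$, so the extracted cover has size at most
\[
n - |I(S_\beta)| \;=\; n - (|\sigma[S_\beta]| - \beta) \;\leq\; \beta + \frac{n\eps}{1+\eps} \;\leq\; \beta \cdot \frac{1+3\eps}{1+\eps},
\]
the last inequality using $n \leq 2\beta$. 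The analogous computation for \maxopinfluence (whose optimum at $k=\beta$ is $n-\beta$) gives an extracted cover of size at most $\beta(1+2\eps)/(1+\eps)$.

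In either case, choosing $\eps$ small enough makes the ratio $(1+3\eps)/(1+\eps)$ (resp.\ $(1+2\eps)/(1+\eps)$) smaller than any prescribed $1+\delta$, which would contradict the APX-hardness of \textsc{Minimum Vertex Cover} on 3-regular graphs unless $\p=\np$. The only delicate point is that $\beta$ is not known in advance, but this is absorbed by looping over all $n$ values of $k$ and keeping the best extracted cover: for the particular value $k=\beta$ the bound above kicks in. Everything else is an elementary calculation once one has the structural fact that $I(S)$ is independent, which comes essentially for free from \autoref{th:noapprox-unanimity}.
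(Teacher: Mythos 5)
Your proof is correct and follows essentially the same route as the paper's: a gap-preserving reduction from \minvc on 3-regular graphs, using the facts that under unanimity the newly activated set is independent (so its complement is a vertex cover), that the optimum at $k=\beta$ equals $n$ (resp.\ $n-\beta$), and that $n\le 2\beta$, with the unknown $\beta$ absorbed by looping over all values of $k$. The only cosmetic difference is that you handle \maxopinfluence by a direct analogous computation, whereas the paper derives it from the \maxclinfluence case via \autoref{lem:op-implies-cl}.
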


{
\begin{proof} 
By \autoref{lem:op-implies-cl}, it suffices to show the result for \maxclinfluence. We show that if \maxclinfluence with unanimity thresholds has a polynomial-time approximation scheme $A_{\varepsilon'}$, $\varepsilon' \in(0,1)$,
on 3-regular graphs when $k=\theta(n)$, then \minvc
has also a polynomial-time approximation scheme on 3-regular graphs. Consider $G=(V,E)$ a 3-regular graph.  Clearly, a minimum vertex cover has a value $opt(G)$ satisfying  $\frac{n}{2}\leq opt(G)<n$.
For any $\varepsilon \in(0,1)$, we apply the polynomial-time approximation scheme  $A_{\varepsilon'}$ that establishes  an $(1+\varepsilon')$-approximation for  \maxclinfluence on graph $G$ for each $k$ between $\frac{n}{2}$ and $n$ and $\varepsilon'=\frac{\varepsilon}{2-\varepsilon}$.
By applying  $A_{\varepsilon'}$ on $G$ for $k$ between $\frac{n}{2}$ and $n$, we obtain a solution  $S_k\subset V$ of size $k$ such that $S_k\cup \sigma(S_k)$ is an $(1+\varepsilon')$-approximation. The set $V\setminus \sigma(S_k)$  is a vertex cover in $G$ of size denoted by $val_k$. We show in the following that the best solution obtained in this way is an $(1+\varepsilon)$-approximation for \minvc on $G$. Indeed the best solution obtained in this way has a value $val^*\leq val_{\ell}$, where $val_{\ell}$ is the value of the solution obtained for $\ell=opt(G)$. Thus $val_{\ell}=|V\setminus \sigma(S_{\ell})|$. Since $|S_{\ell}\cup \sigma(S_{\ell})|$ is an $(1+\varepsilon')$-approximation and the optimum solution activates all vertices,  we have $|S_{\ell}\cup \sigma(S_{\ell})|\geq \frac{n}{1+\varepsilon'}$ and   $|V\setminus (S_{\ell}\cup\sigma(S_{\ell}))|\leq n\frac{\varepsilon'}{1+\varepsilon'}$. Thus $val^* \leq val_{\ell}\leq \ell+n\frac{\varepsilon'}{1+\varepsilon'}\leq \ell(1+\frac{2\varepsilon'}{1+\
\varepsilon'})=\ell(1+ \varepsilon)$.
The theorem follows from the fact that \minvc
has no  polynomial-time approximation scheme on 3-regular graphs, unless $\p=\np$ \cite{AK00}.
\qedfill
\end{proof}
}

In \autoref{th:whardness-general} we showed that \influence with unanimity thresholds is $\wone$-hard w.r.t. parameters $k$ and $\ell$.
In the following we give several fixed-parameter tractability results for \influence w.r.t. parameter~$k$ on regular graphs and bounded degree graphs with unanimity thresholds.
First we show that using results of Cai \emph{et al.}~\cite{cai2006b} 
we can obtain fixed-parameter tractability. Then we establish an explicit and more efficient combinatorial algorithm.
Using \cite{cai2006b} we can show:

\begin{theorem}\label{thm:UnanimityBoundedDegreeFPT}
	\sloppy \influence with unanimity thresholds can be solved in $2^{O(k \Delta^3)} n^2 \log n$ time where~$\Delta$ denotes the maximum degree
and  in~$2^{O(k^2 \log k)} n \log n$ time for regular graphs.
\end{theorem}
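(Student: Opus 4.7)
My plan is to first establish a structural property of unanimity thresholds that collapses the propagation to a single round, and then to invoke the random separation framework of Cai, Chan and Chan~\cite{cai2006b} to perform a bounded search for $S$. The key lemma to prove is that $\sigma(S)=S\cup\{v\in V\setminus S:N(v)\subseteq S\}$, i.e.\ the process terminates after one round. Indeed, take $v\in\sigma(S)\setminus S$ activated in the smallest round $t\ge 1$; unanimity forces $N(v)\subseteq T_{t-1}$, and if $t\ge 2$ some neighbor $u$ of $v$ must already lie in $T_{t-1}\setminus S$, having been activated strictly before $v$, which contradicts the minimality of $t$. Hence $\sigma(S)\subseteq N[S]$ and $|\sigma(S)|\le k(\Delta+1)$, so we may assume $\ell\le k(\Delta+1)$ (and reject in polynomial time otherwise).

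For the bounded-degree bound $2^{O(k\Delta^3)}n^2\log n$, the idea is to apply random separation. The set $S$ together with all the data needed to verify "$N(v)\subseteq S$" for every candidate newly-activated $v\in N(S)\setminus S$ lies within $N^2[S]$, whose cardinality is at most $k(\Delta+1)^2$. Color each vertex of $V$ red or blue independently and uniformly at random, and call the coloring \emph{good} if every vertex of $N^2[S]$ is red while every vertex of its external boundary $N(N^2[S])\setminus N^2[S]$ is blue. The good event has probability at least $2^{-|N^3[S]|}\ge 2^{-c\,k\Delta^3}$ for a constant $c$, since $|N^3[S]|\le k(\Delta+1)^3$. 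Under a good coloring, $S$ sits inside a red connected cluster of size $O(k\Delta^2)$ whose internal edges fully determine $\sigma(S)$; within the cluster we enumerate $k$-subsets and retain the one that maximises $|\sigma(\cdot)|$. Derandomising with a standard $(n,O(k\Delta^3))$-universal family of size $2^{O(k\Delta^3)}\log n$, and charging $O(n^2)$ per coloring for computing the red clusters and testing candidates, yields the claimed running time.

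For the regular case, observe first that if $\Delta>k$ then no $v\notin S$ can satisfy $N(v)\subseteq S$, so $\sigma(S)=S$ and it suffices to test $\ell\le k$; otherwise $\Delta\le k$ and the graph has maximum degree at most $k$. Each connected component of $G[\sigma(S)]$ is then a connected subgraph of $G$ of size at most $k(\Delta+1)\le k(k+1)$, and the classical bound on connected subgraphs in graphs of bounded degree guarantees that the number of such subgraphs rooted at a fixed vertex is $(ek)^{O(k^2)}=2^{O(k^2\log k)}$, so that $n\cdot 2^{O(k^2\log k)}$ candidate components are enumerated overall. For each enumerated component $C$ one computes in polynomial time the unique minimal $S_C\subseteq C$ realising $C$ as a $\sigma$-component, and a knapsack-style combination of at most $k$ disjoint such components (which accounts for the extra $\log n$ factor) produces a solution whenever one exists.

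The main obstacle I foresee is fixing the correct separation radius in the bounded-degree reduction: one must argue that $N^2[S]$ really captures all the information required to reconstruct $\sigma(S)$, and that monochromatically colouring its external $1$-boundary (which lives in $N^3[S]$) is what forces the $\Delta^3$ factor in the exponent. For the regular case the care is rather to show that enumerating \emph{connected} candidate components loses nothing, which is guaranteed because every component of $\sigma(S)$ is connected in $G$ and contains at least one seed from $S$; combined with the knapsack step the enumeration is then complete.
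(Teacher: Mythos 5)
Your bounded-degree part is essentially the paper's proof: the paper simply invokes Theorem~4 of Cai~et~al.\ with $i=3$, and what you write is an unpacking of that random-separation argument (color $N^2[S]$ red, its boundary blue, pay $2^{-O(k\Delta^3)}$, derandomize with universal sets). One imprecision: under a good coloring $S$ need not sit in \emph{one} red cluster, since $G[\sigma[S]]$ can be disconnected; you must process every red cluster, record the achievable (seeds, activated) pairs per cluster, and combine by a knapsack over clusters. This is unproblematic precisely because distinct red clusters are vertex-disjoint by construction, and it is part of the standard framework, so I count it as a presentational slip rather than a gap. Your one-round lemma is correct and is also used (implicitly) by the paper.

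The regular case is where you diverge, and where there is a genuine gap. The paper exploits regularity to get $|\sigma(S)|\le k$ (each activated vertex consumes $\Delta$ of the at most $k\Delta$ edge slots leaving $S$), hence $|S\cup\sigma(S)|\le 2k$; it then guesses the isomorphism type of $G[S\cup\sigma(S)]$ among the $2^{O(k^2)}$ graphs on at most $2k$ vertices and calls Cai~et~al.'s bounded-degree subgraph isomorphism algorithm ($2^{O(\Delta k\log k)}n\log n$, and it handles disconnected patterns). You instead enumerate connected candidate components and claim that ``a knapsack-style combination of at most $k$ disjoint such components'' finishes the job. But your enumerated candidates are massively overlapping subgraphs of $G$, and you must select several of them that are \emph{pairwise vertex-disjoint} with prescribed seed/activation counts; this is a packing problem, not a knapsack, and nothing in your argument ensures disjoint representatives exist among the candidates you keep. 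This is exactly the difficulty that the paper's Algorithm~1 (the combinatorial algorithm of the subsequent theorem) is built to overcome, via the counting argument that a pair with ``many'' realizations always admits a realization disjoint from everything else. Without that (or a second application of random separation/color coding to separate the components), your regular-case algorithm is incomplete. A secondary issue: the ``unique minimal $S_C$'' realizing a candidate component does not exist in general --- the split of $C$ into seeds and activated vertices amounts to choosing an independent set of activated vertices inside $C$, which is a nontrivial (though brute-forceable, since $|C|=O(k^2)$) choice rather than a canonical one.
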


{
\begin{proof}
For graphs of maximum degree $\Delta$, we simply apply the result from \cite[Theorem 4]{cai2006b} with $i=3$.

Let $G$ be a $\Delta$-regular graph. When  $\Delta >k$, any~$k$ vertices of the graph form a  solution since no vertex outside the set becomes active.
	Hence, we assume in the following that~$\Delta \le k$.
Since $G$ is regular, it   follows that any subset $S$, $|S|=k$ can activate at most~$k$ vertices.
	Hence, the graph~$G[\sigma[S]]$ contains at most~$2k$ vertices and, thus, $\ell \le k$.
	Furthermore, since we consider unanimity thresholds, every vertex $v \in \sigma(S)$ has exactly~$\Delta$ neighbors in~$S$ and, thus, $|N_{G[\sigma[S]]}(v)| = \Delta$ and~$N_{G[\sigma[S]]}(v) \subseteq S$.
Our fpt-algorithm solving \influence runs  in two phases:
	\begin{description}
		\item[Phase 1:] Guess 
a graph~$H$ being isomorphic to~$G[\sigma[S]]$.
		\item[Phase 2:] Check whether~$H$ is a subgraph of~$G$.
	\end{description}
	Phase~1 is realized by simply iterating over all possible graphs~$H$ with $k+\ell$ vertices.
	A simple upper bound on the number of different graphs with~$k+\ell$ vertices is~$2^{\binom{k+\ell}{2}} \le 2^{4k^2}$.
	Hence, in Phase~1 the algorithm tries at most~$O(2^{4k^2})$ possibilities.
	Note that Phase~2 can be done in~$2^{O(\Delta k \log k)} n \log n$ using a result from~\cite[Theorem 1]{cai2006b}.
	Altogether this gives a running time \sloppy of~$O(2^{4k^2} 2^{O(\Delta k \log k)} n \log n)$.
	Since~$\Delta \le k$, this gives~$2^{O(k^2 \log k)} n \log n$. The correctness of the algorithm follows from the exhaustive search.
%
\qedfill\end{proof}
}

While the previous results use general frameworks to solve the problem, we now give a direct combinatorial algorithm for \influence with unanimity thresholds on bounded degree graphs. 
For this algorithm we need the following definition and lemma.

\begin{definition}
	Let~$(\alpha,\beta)$ be a pair of positive integers, $G=(V,E)$  an undirected graph with unanimity thresholds, and $v\in V$ a vertex.
	A vertex~$v$ is called a \emph{realizing} vertex for the pair~$(\alpha,\beta)$ if there exists a vertex subset~$V' \subseteq N^{2\alpha-1}[v]$ of size~$|V'| \le \alpha$ such that~$|\sigma(V')| \ge \beta$ and~$\sigma[V']$ is connected.
	Furthermore, $\sigma[V']$ is called a \emph{realization} of the pair~$(\alpha,\beta)$.
\end{definition}
We show first  that in bounded degree graphs the problem of deciding whether a vertex is a realizing vertex for a pair of positive integers~$(\alpha,\beta)$ is fixed-parameter tractable w.r.t. parameter~$\alpha$.

\begin{lemma} \label{lem:connectedCase}
	Checking whether a vertex~$v$ is a realizing vertex for a pair of positive integers~$(\alpha,\beta)$ can be done in~
$\Delta^{O(\alpha^2)}$ time, where~$\Delta$ is the maximum degree.
\end{lemma}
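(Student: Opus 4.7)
\smallskip\noindent\textit{Proof plan.}
The key observation on which the entire algorithm rests is that with unanimity thresholds the propagation halts after a single round, so that for every $V' \subseteq V$,
\[
\sigma(V') \;=\; \{w \in V \setminus V' : N_G(w) \subseteq V'\}.
\]
I would establish this by contradiction: if some vertex $w$ were activated at round $2$ but not at round $1$, then $w \notin V'$ and $w$ has a neighbor $w_1$ activated at round $1$; but $w_1$ being activated at round~$1$ forces $N_G(w_1) \subseteq V'$, so $w \in N_G(w_1) \subseteq V'$, contradicting $w \notin V'$. Consequently $\sigma[V'] = V' \cup \{w : N_G(w) \subseteq V'\} \subseteq V' \cup N(V')$, and in particular $|\sigma[V']| \le \alpha(1+\Delta)$ and $\sigma[V'] \subseteq N^{2\alpha}[v]$.

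Armed with this closed form, the algorithm becomes a straightforward brute force. First, I would compute $N^{2\alpha-1}[v]$ via a truncated BFS from $v$; its size is at most $\sum_{i=0}^{2\alpha-1}\Delta^i = O(\Delta^{2\alpha-1})$. Next, I would enumerate every $V' \subseteq N^{2\alpha-1}[v]$ with $|V'| \le \alpha$, of which there are at most $O((\Delta^{2\alpha-1})^{\alpha}) = \Delta^{O(\alpha^2)}$. For each such $V'$, I would compute $\sigma(V')$ in $O(\alpha\Delta^2)$ time by iterating over the at most $\alpha\Delta$ vertices of $N(V')$ and testing whether the full $G$-neighborhood of each one lies inside $V'$. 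Finally I would check the two conditions $|\sigma(V')| \ge \beta$ and connectivity of $G[V' \cup \sigma(V')]$ (a graph on at most $\alpha(1+\Delta)$ vertices) in $\mathrm{poly}(\alpha,\Delta)$ time, and return yes as soon as some $V'$ passes both tests.

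Because $\sigma[V'] \subseteq N^{2\alpha}[v]$, no step of this procedure ever inspects a vertex outside this bounded ball, so the total running time stays at $\Delta^{O(\alpha^2)}$, independent of $n$. The only real content is the one-round observation above; the rest is routine enumeration and fits easily within the claimed bound, so I do not anticipate any substantial obstacle beyond making the one-round argument precise.
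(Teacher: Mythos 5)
Your proposal is correct and follows essentially the same route as the paper: exhaustively enumerate all subsets $V'$ of size at most $\alpha$ inside $N^{2\alpha-1}[v]$, test activation and connectivity for each, and bound the number of candidates by $\Delta^{O(\alpha^2)}$. Your explicit one-round propagation argument (which the paper only invokes implicitly here, though it states the same fact elsewhere for the unanimity case) is a welcome extra bit of rigor but does not change the approach.
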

{
\begin{proof}
	The algorithm solving the problem checks for all vertex subsets~$V'$ of size~$\alpha$ in~$N^{2\alpha - 1}[v]$ whether~$V'$  activates at least~$\beta$ vertices and whether~$\sigma[V']$ is connected.
	Since we consider unanimity thresholds it follows that~$\sigma[V'] \subseteq N^{2\alpha}[v]$.

	The correctness of this algorithm results from the exhaustive search.
	We study in the following the running time:
	The~$(2\alpha - 1)^{\text{th}}$ neighborhood of any vertex contains at most~$\Delta(\Delta^{2\alpha}) / (\Delta - 1) + 1 \le 2\Delta^{2\alpha}$ vertices.
	Hence, there are~$2^\alpha\Delta^{(2\alpha)\alpha}$ possibilities to choose the~$\alpha$ vertices forming~$V'$.
	For each choice of~$V'$ the algorithm has to check how many vertices are activated by~$V'$.
	Since this can be done in linear time and there are~$O(\Delta \Delta^{2\alpha})$ edges, this gives another~$O(\Delta^{2\alpha+1})$ term.
	Altogether, we obtain   a running time of~$O(2^\alpha\Delta^{2\alpha^2 + 2\alpha + 1})$ = $\Delta^{O(\alpha^2)}$.
	\qedfill
\end{proof}
}
Consider in the following the \conninfluence problem that is \influence  with the additional requirement that~$G[\sigma[S]]$ has to be connected.
Note that with \autoref{lem:connectedCase} we can show that \conninfluence is fixed parameter tractable w.r.t. parameter~$k$ on bounded degree graphs.
Indeed,  observe that two vertices in~$\sigma(S)$ cannot be adjacent since we consider unanimity thresholds.
From this and the requirement that~$G[\sigma[S]]$ is connected, it follows that~$G[\sigma[S]]$ has a diameter of at most~$2k$.
Hence, the algorithm for \conninfluence checks for each vertex~$v \in V$ whether~$v$ is a realizing vertex for the pair~$(k,\ell)$.
By \autoref{lem:connectedCase} this gives an overall running time of~$\Delta^{O(k^2)} \cdot n$.

We can extend the algorithm for the connected case to deal with the case where~$G[\sigma[S]]$ is not connected.
The general idea is as follows.
For each connected component~$C_i$ of~$G[\sigma[S]]$ the algorithm guesses the number of vertices in~$S \cap C_i$ and in~$\sigma(S) \cap C_i$.
This gives an integer pair~$(k_i,\ell_i)$ for each connected component in~$G[\sigma[S]]$.
Similar to the connected case, the algorithm will determine realizations for these pairs and the union of these realizations give~$S$ and~$\sigma(S)$.
Unlike the connected case, it is not enough to look for just one realization of a pair~$(k_i,\ell_i)$ since the realizations of different pairs may be not disjoint and, thus, vertices may be counted twice as being activated.
To avoid the double-counting we show that if there are ``many'' different realizations for a pair~$(k_i,\ell_i)$, then there always exist a realization being disjoint to all realizations of the other pairs.
Now consider only the integer pairs that do not have ``many'' different realizations.
Since there are only ``few'' different realizations possible, the graph induced by all the vertices contained in all these realizations is ``small''.
Thus, the algorithm can guess the realizations of the pairs having only ``few'' realizations and afterwards add greedily disjoint realizations of pairs having ``many'' realizations.
See \autoref{alg:PseudocodeFPTinfluence} for the pseudocode.

\begin{algorithm}[!ht]
 \small{
 \begin{algorithmic}[1]
	\Procedure{\textsc{solveInfluence}}{$G$, $\thr$, $k$, $\ell$}
		\State Guess~$x \in \{1, \ldots, k\}$ \Comment{$x$: number of connected components of~$G[\sigma[S]]$} \label{line:guess_x}
		\State Guess~$(k_1, \ell_1), \ldots, (k_x, \ell_x)$ such that~$\sum_{i=1}^x k_i = k$ and~$\sum_{i=1}^x \ell_i = \ell$ \label{line:guessPairs} 
		\State Initialize~$c_1 = c_2 = \ldots = c_x \gets 0$ \Comment{one counter for each integer pair~$(k_i,\ell_i)$} \label{line:initCounters}
		\For{each vertex~$v \in V$} \Comment{determine realizing vertices} \label{line:allVerticesRealizationsCheck}
			\For{$i \gets 1$ \textbf{to}~$x$}
				\If{$v$ is a realizing vertex for the pair~$(k_i,\ell_i)$} \Comment{see \autoref{lem:connectedCase}} \label{line:checkPairRealization}
					\State $c_i \gets c_i + 1$
					\State $T(v,i) = $ ``yes''
				\Else\
					\State $T(v,i) = $ ``no''
				\EndIf
			\EndFor
		\EndFor
		\State initialize~$X \gets \emptyset$ \Comment{$X$ stores all pairs with ``few'' realizations}
		\For{$i \gets 1$ \textbf{to}~$x$}  \label{line:determinePairsWithFewRealization}
			\If{$c_i \le 2 \cdot x \cdot \Delta^{4k}$}
				\State $X \gets X \cup \{i\}$
			\EndIf
		\EndFor
		\For{each vertex~$v \in V$} \Comment{remove vertices not realizing any pair in $X$} \label{line:deleteVertices}
			\If{$\forall i\in X: T(v,i) = $ ``no''}
				\State delete~$v$ from~$G$.
			\EndIf
		\EndFor
		\If{all pairs~$(k_i,\ell_i)$, $i \in X$, can be realized in the remaining graph} \label{line:checkRemainingRealization}
			\State \Return `YES'
		\Else\
			\State \Return `NO'
		\EndIf
	\EndProcedure
 \end{algorithmic}
 }
 \caption{The pseudocode of the algorithm solving the decision problem \influence. The guessing part in the algorithm behind \autoref{lem:connectedCase} is used in Line~\ref{line:checkPairRealization} as subroutine. The final check in Line~\ref{line:checkRemainingRealization} is done by brute force checking all possibilities.}
 \label{alg:PseudocodeFPTinfluence}
\end{algorithm}


\begin{theorem}\label{thm:unanimity-bounded-degree-fpt}
	\autoref{alg:PseudocodeFPTinfluence} solves \influence with unanimity thresholds in~$2^{O(k^2 \log (k\Delta))} \cdot n$ time, where $\Delta$ is the maximum degree of the input graph.
\end{theorem}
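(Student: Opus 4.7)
The proof factors into a structural bound, the two directions of correctness, and a running-time analysis. The cornerstone is a \emph{diameter bound}: under unanimity thresholds, if $|V'| \le \alpha$ and $\sigma[V']$ is connected then $G[\sigma[V']]$ has diameter at most $2\alpha - 1$. A short induction on activation steps shows that any vertex activated at step $t$ lies at graph distance at most $t$ from $V'$, and unanimity forces the process to halt within $\alpha$ steps, because each new activation must be completely surrounded by already active vertices, and the boundary of $\sigma[V']$ in $G$ is contained in $V'$. A direct consequence is that every vertex of $\sigma[V']$ is itself a realizing vertex for the pair $(\alpha, |\sigma(V')|)$, and $\sigma[V'] \subseteq N^{2\alpha}[v]$ for any $v \in \sigma[V']$, giving $|\sigma[V']| = O(\Delta^{2\alpha})$.

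\paragraph{Correctness, forward direction.}
Given a solution $S$ with $|\sigma(S)| \ge \ell$, consider the components $C_1,\ldots,C_{x^*}$ of $G[\sigma[S]]$. Unanimity forces every non-seed activated vertex to have all its neighbors in $\sigma[S]$, so each $C_i$ must contain at least one seed and $x^* \le k$. Setting $k_i^* = |C_i \cap S|$ and $\ell_i^* = |C_i \cap \sigma(S)|$ yields a valid guess for Lines~\ref{line:guess_x}--\ref{line:guessPairs}. The diameter bound then ensures that every vertex of $C_i$ is a realizing vertex for $(k_i^*, \ell_i^*)$; therefore no vertex of $C_i$ is deleted in Line~\ref{line:deleteVertices} whenever $(k_i^*, \ell_i^*) \in X$, and the realization $V_i^* = C_i \cap S$ survives inside the remaining graph so that the check in Line~\ref{line:checkRemainingRealization} succeeds.

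\paragraph{Correctness, backward direction.}
Conversely, if the algorithm returns YES, Line~\ref{line:checkRemainingRealization} exhibits pairwise disjoint realizations $V_i'$ for every $i \in X$. For each pair $j \notin X$ we invoke a \emph{greedy lemma}: when $j$ is processed, the union $U$ of previously chosen realizations has $|U| \le (x-1)\cdot O(\Delta^{2k})$, since each realization lies in some $N^{2k}[v]$ of size $O(\Delta^{2k})$. A realizing vertex $v$ of pair $j$ is ``unusable'' only if $N^{2k}[v] \cap U \ne \emptyset$, i.e., $v \in N^{2k}[U]$, and the number of such vertices is at most $|U| \cdot O(\Delta^{2k}) = O(x\Delta^{4k})$. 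The threshold $2x\Delta^{4k}$ is chosen so that $c_j$ exceeding it leaves at least one usable realizing vertex, whose ball $N^{2k}[v]$ is disjoint from $U$ and hence supplies a realization disjoint from $U$. Iterating gives pairwise disjoint realizations for all pairs; their union $S = \bigcup_i V_i'$ has $|S| \le k$, and by disjointness together with the monotonicity of the propagation under unanimity, $|\sigma(S)| \ge \sum_i |\sigma(V_i')| \ge \sum_i \ell_i = \ell$.

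\paragraph{Running time and main obstacle.}
Line~\ref{line:guess_x} has at most $k$ choices, and by the diameter bound we may restrict to $\ell \le k \cdot O(\Delta^{2k})$, so Line~\ref{line:guessPairs} generates at most $(2k\Delta^{2k})^{O(k)} = 2^{O(k^2 \log(k\Delta))}$ compositions. Populating the table $T$ costs $n \cdot k \cdot \Delta^{O(k^2)}$ by \autoref{lem:connectedCase}. The remaining graph after Line~\ref{line:deleteVertices} has at most $|X| \cdot 2x\Delta^{4k} = O(k^2\Delta^{4k})$ vertices, so the brute force of Line~\ref{line:checkRemainingRealization} enumerates at most $2^{O(k^2 \log(k\Delta))}$ combinations of pairwise disjoint realizations. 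Multiplying the contributions yields the claimed $2^{O(k^2\log(k\Delta))} \cdot n$ bound. The main obstacle is establishing the diameter bound cleanly from unanimity thresholds, because it simultaneously underlies (a) the forward direction (so that no vertex of any $C_i$ is deleted), (b) the $O(\Delta^{2k})$ size bound on realizations used in the greedy lemma, and (c) the size bound on the remaining graph used in the final brute force.
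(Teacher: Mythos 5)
Your proposal is correct and follows essentially the same route as the paper's proof: the same decomposition of $G[\sigma[S]]$ into components with guessed pairs $(k_i,\ell_i)$, the same counting argument behind the threshold $2x\Delta^{4k}$ (a realization lies in a ball of radius $O(k)$ of size $O(\Delta^{2k})$, so ``many'' realizing vertices guarantee a realization disjoint from all others), and the same brute force on the $O(x^2\Delta^{4k})$-vertex remaining graph. The only cosmetic difference is that you derive the diameter bound by an induction on activation rounds, whereas under unanimity the process stabilizes after a single round and the paper gets the bound directly from the fact that activated non-seed vertices form an independent set whose neighborhoods lie entirely in $S$ --- a fact you also state, so the argument goes through.
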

{
\begin{proof}
	Let~$S$ be a solution set, that is, $S \subset V$, $|S| \le k$ and~$\sigma(S) \ge \ell$.
	In the following we show that \autoref{alg:PseudocodeFPTinfluence} decides whether or not such set~$S$ exists in~$2^{O(k^2 \log (k\Delta))} \cdot n$ time.
	We remark that the algorithm can be adapted to also give such set~$S$ if it exists. 
	First we prove the correctness of the algorithm and then show the running time bound.

	\emph{Correctness:} We now show that a solution set~$S$ exists if and only if the algorithm returns ``YES''.
	``$\Rightarrow:$''
	Assume that~$S$ is the solution set.
	Observe that~$G[\sigma[S]]$ consists of at most~$k$ connected components and, thus, the guesses in Lines~\ref{line:guess_x} and~\ref{line:guessPairs} are correct.
	Clearly, in the solution set~$S$ there is a realization for each pair~$(k_i, \ell_i)$.
	Furthermore observe that in Line~\ref{line:determinePairsWithFewRealization} it holds that~$X \subseteq \{1, \ldots, x\}$ and that in the loop starting in Line~\ref{line:deleteVertices} only vertices that cannot realize any pair corresponding to~$X$ are deleted.
	Hence, there exists a realization for the pairs corresponding to~$X$ in the remaining graph.
	Since the checking in Line~\ref{line:checkRemainingRealization} is done by trying all possibilities, the algorithm returns ``YES''.
	
	``$\Leftarrow:$'' Now assume that the algorithm returns ``YES''.
	Observe that this implies that in Line~\ref{line:checkRemainingRealization} there exists a realization for the all the pairs corresponding to~$X$.
	Hence, it remains to show that for each pair~$(k_j, \ell_j)$ where $j \in \{1, \ldots , x\} \setminus X$ there exists a realization in $G$.
	(Clearly, if all pairs are realized then the union of the realizations form the vertex set $\sigma[S]$ such that $|S| = k$.)
	To see that there exist realizations for these pairs observe the following:
	The $(4k)^{\text{th}}$ neighborhood of any vertex contains at most $2\Delta^{4k}$ vertices.
	Thus, if in the case of two pairs $(k_1 , \ell_1), (k_2, \ell_2)$ the value of the second counter is $c_2 > 2\Delta^{4k}$, then we can deduce that for every realizing vertex~$v_1$ for~$(k_1, \ell_1)$ there exists a realizing vertex $v_2$ for $(k_2 , \ell_2)$ such that the distance~$d$ between~$v_1$ and~$v_2$ is more than~$4k$.
	Since $d > 4k$, it follows that the realizations for $(k_1 , \ell_1)$ and~$(k_2 , \ell_2)$ do not overlap.
	(If two realizations would overlap then some vertices in $\sigma(S)$ may be counted twice.)
	Generalizing this argument to $x$ integer pairs $(k_1 , \ell_1 ), \ldots , (k_x , \ell_x)$ yields the following: If there exists an $i \in \{1, \ldots , x\}$ such that $c_i > x \cdot 2 \cdot \Delta^{4k}$, then for any realization of the pairs $(k_j , \ell_j )$ with $i \neq j$ there exists a non-overlapping realization of $(k_i , \ell_i )$.
	Thus, we can ignore the pair $(k_i, \ell_i)$ where $c_i > x \cdot 2 \cdot \Delta^{4k}$ in the remaining algorithm and can assume that $(k_i , \ell_i )$ is realized.
	
	Observe that from the Lines~\ref{line:allVerticesRealizationsCheck} to~\ref{line:deleteVertices} it follows that for all $j \in \{1, \ldots , x\} \setminus X$ we have $c_j > x \cdot 2 \cdot \Delta^{4k}$.
	Thus, from the argumentation in the previous paragraph it follows that there exist non-overlapping realizations for all pairs corresponding to $\{1, \ldots , x\} \setminus X$.
	Thus, there exists a solution set~$S$ as required.
	
	\emph{Running time:}
	Observe that~$\ell \le \Delta k$ as described in the proof of \autoref{lem:boundedDegAffect}.
	Thus, the guessing in Lines~\ref{line:guess_x} and~\ref{line:guessPairs} can clearly be done in~$O(k \cdot k^k (\Delta k)^k) = O(k^{2k+1} \Delta^{k})$.
	By \autoref{lem:connectedCase} the checking in Line~\ref{line:checkPairRealization} can be done in~$\Delta^{O(k_i^2)}$ time.
	Thus, the loop in Line~\ref{line:allVerticesRealizationsCheck} requires~$n \cdot \sum_{i=1}^x \Delta^{O(k_i^2)} \le \Delta^{O(k^2)} \cdot x \cdot n$ time.
	Clearly, the loop in Line~\ref{line:determinePairsWithFewRealization} needs~$O(x) \le O(k)$ time.
	Furthermore, the loop in Line~\ref{line:deleteVertices} needs~$O(k \cdot n)$ time.
	For the checking in Line~\ref{line:checkRemainingRealization} observe the following.
	After deleting the vertices in the loop in Line~\ref{line:deleteVertices} the remaining graph can have at most~$\sum_{i \in X}c_i \le x \cdot 2\cdot x \cdot \Delta^{4k}$ vertices.
	Furthermore, $\sum_{i \in X}k_i \le k$ and, thus, there are at most~$(2\cdot x^2 \cdot \Delta^{4k})^k$ candidate subsets for the solution set~$S$.
	Checking whether~$\sum_{i \in X}k_i$ chosen vertices activate~$\sum_{i \in X}\ell_i$ other vertices can be done in~$(2\cdot x^2 \cdot \Delta^{4k})^2$ time.
	Hence, the checking in Line~\ref{line:checkRemainingRealization} can be done in~$\Delta^{O(k^2)}$ time.
	Putting all together we arrive at a running time of~$(k\Delta)^{O(k^2)} \cdot n = 2^{O(k^2 \log(k\Delta))} \cdot n$.
	\qedfill
\end{proof}
}

\if 0
\section{Parameterized inapproximability of \TSS}

In this section we show that we can transfer known strong inapproximability result from \mcs to \TSS.
Marx~\cite{Mar13} showed that \mcs admits no parameterized~$g(k)$-approximation for any function~$g$ unless~$\fpt=\wx{P}$. \todo{we should introduce the parameterized approximation notion of Marx since the parameter is what we want to approximate now}
Here, \mcs is defined as follows:

\problemdec
	{\mcs}
	{A monotone circuit~$C$ and an integer~$k$.}
	{Is there a satisfying assignment with weight at most~$k$, that is, a satisfying assignment with at most~$k$ input nodes set to true?}

A \emph{monotone (boolean) circuit} is a directed acyclic graph.
The nodes with in-degree at least two are labeled with \emph{and} or with \emph{or}, the~$n$ nodes with in-degree zero are input nodes, and due to the monotonicity there are no nodes with in-degree one (negation nodes in standard circuits).
Furthermore, there one node with out-degree zero is the \emph{output node}.
For an assignment of the input nodes with true/false, the circuit is satisfied if the output node is evaluated (in the natural way) to true.
The weight of an assignment is the number of input nodes assigned to true.
An assignment is denoted as set~$A \subseteq \{1, \ldots, n\}$ where the~$i \in S$ if and only if the~$i^{\text{th}}$ input node is assigned to true.

By reducing \mcs to \TSS in polynomial time such that there is a one-to-one correspondence between the solutions, we next show how that the inapproximability result transfers to \TSS.
First, we show one reduction working with general thresholds and then describe how to achieve with further gadgetry constant or majority thresholds in our constructed instance.

\paragraph{Reduction}
As mentioned above, we will reduce from \mcs.
To this end we will use the following gadget:
First, we use the \emph{directed edge gadget } gadget as in the proof of \autoref{th:not-n-approx-cst}:
A directed edge from a vertex $u$ to another vertex $v$ consists of a $4$-cycle $\{a,b,c,d\}$ such that $a$ and $u$ as well as~$c$ and $v$ are adjacent. Moreover $\thr(a) = \thr(b) = \thr(d) = 1$ and $\thr(c) = 2$.
The idea is that the vertices in the directed edge gadget become active if~$u$ is activated but not if~$v$ is activated.
Hence, the activation process may go from~$u$ to~$v$ via the gadget but not in the reverse direction. \todo{currently a dirty copy\& paste solution. Introduce directed edge only once in the paper...}


Now we show our construction.
Let~$(C,k)$ be the given instance of \mcs.
Initialize~$G = (V,E)$ as a copy of the directed acyclic graph~$C$ where each directed edge is replaced by a directed edge gadget.
We call a vertex in~$G$ an input vertex (output vertex, and-vertex, or-vertex) if it corresponds to an input node (output node, and-node, or-node, respectively).
Next, for each and-node in~$C$ with in-degree~$\ell$ set the threshold of the corresponding and-vertex in~$G$ to~$\ell$ and for each or-vertex in~$G$ set the threshold to~$1$.
For each input vertex in~$G$, set the threshold to~$n+1$.
Next add~$n$ copies of~$G$ to~$G$ and ``merge'' all vertices corresponding to the same input node, that is, after that step $G$ contains one vertex for each input node of~$C$ and~$n+1$ vertices each other node.
Finally, add directed edges from each output vertex to each input vertex and set~$k'=k$.
This completes our reduction.

\begin{lemma}\label{lem:monCircuitReduction}
	Let~$C$ a monotone circuit and~$G = (V,E)$ the constructed graph with thresholds as described above.
	For every satisfying assignment~$A$ for~$C$ there exists a target set of size~$|A|$ for~$G$ and, conversely, for every target set~$S$ for~$G$ there exists a satisfying assignment of size at most~$|S|$ for~$C$.
\end{lemma}

\begin{proof}
	Let~$A \subseteq \{1, \ldots, n\}$ be a satisfying assignment for~$C$.
	Then we claim that the vertices in~$G$ that correspond to the input nodes in~$A$ is a target set.
	We denote with~$S$ this set of vertices.
	Clearly, $|S| = |A|$.
	To see that~$S$ is indeed a target set for~$G$, observe that due to the construction all the~$n+1$ vertices in~$G$ corresponding to the output node in~$C$ become active.
	Hence, also all input vertices that are not in~$S$ become active.
	Thus, all remaining vertices in~$G$ are activated. \todo{is this obvious or should I add more details?}

	To show converse direction let~$S \subseteq V$ be a target set for~$G$.
	First, observe that we can assume that~$|S| < n$ as otherwise the satisfying assignment simply sets all input nodes to true.
	Next, observe that we can assume that~$S$ is a subset of the input vertices:
	Since~$G$ contains~$n+1$ copies of the circuit (excluding the input nodes), there is at least one copy without vertices in~$S$ and, hence, the output node in that copy becomes active solely because of the input vertices in~$S$.
	Finally, assume by contradiction that the the set of input nodes that correspond to the vertices in~$S$ do not form a satisfying assignment.
	Hence, the output node of~$C$ is evaluated to false.
	However, due to the construction, this implies that the vertices corresponding to the output node are not activated, contradicting that~$S$ is a target set for~$G$.
	\qedfill
\end{proof}

From \autoref{lem:monCircuitReduction} we can derive the same inapproximability result for \TSS as for \mcs.
\begin{theorem}
	There is no \sloppy parameterized~$g(k)$-approximation algorithm for \TSS for any computable function $g$, unless $\fpt=\wx{P}$.
\end{theorem}
We next show how to enhance the above result for \TSS with majority and constant thresholds...

\section{Parameterized inapproximability of \TSS}

Downey \etal~\cite{downey2008} showed that \ids admits no parameterized $g(k)$-approximation for any function $g$ unless $\fpt=\wtwo$. This problem is defined as the \ds problem with the additional constraint that the solution must be an independent set. We will use this last result to prove the same lower bound for \TSS.



\begin{lemma} \label{th:no-gk-approx}
There is no \sloppy parameterized~$g(k)$-approximation algorithm for \TSS for any computable function $g$, unless $\fpt=\wtwo$.
\end{lemma}

\begin{proof}
  We construct a linear fpt-reduction from \ids to \TSS as follows.

  Given a \ids-instance~$(G=(V,E),k)$, we start with the \emph{basic reduction} of \autoref{sec:preliminaries}.
	Then for each vertex~$v_i \in V$
	add~$(k+1)(n-1)$ \textit{pending-vertices}~$v_i[a,b]$ to~$G'$ adjacent to~$v_i^t$ for all $a \in \{1,\ldots,n\}\setminus \{i\}$ and $b\in \{1,\ldots,k+1\}$; set~$\thr(v_i[a,b]) = 1$.
	Furthermore, add $k+1$ vertices~$w_1,w_2,\ldots,w_{k+1}$ to~$G'$. For each $b \in \{1,\ldots,k+1\}$,  make $w_b$ adjacent to all pending-vertices~$v_i[a,b]$ with $a \in \{1,\ldots,n\}\setminus \{i\}$ and set~$\thr(w_b) = k (n-1)$.
	Moreover, set~$\thr(v_i^t) = \deg_{G'}(v_i^t)$ for each top vertex $v_i^t$.
	Finally, if~$v_i, v_j \in V$, $i \ne j$, are adjacent in~$G$ then we merge~$v_j[i,b]$ and $v_i[j,b]$ for each $b\in \{1,\ldots,k+1\}$.
	This completes the construction (see \autoref{fig:no-gk-approx}).
  Let $W$ denotes the set $\{w_1,w_2,\ldots,w_{k+1}\}$.
  For each $b \in \{1,\ldots,k+1\}$, we will refer to the subgraph of $G'$ induced by $N[w_b]$ as to the \textit{checking-gadget}.

\begin{figure}[ht!]
\begin{center}
\begin{tikzpicture}
\node[vertex,draw] (w1) at (2.5,6) {$w_ {1}$};
\node[vertex,draw] (w2) at (4.5,6) {$w_ {2}$};
\node[vertex,draw] (w3) at (6.5,6) {$w_ {3}$};

\foreach \x in {1,...,4} {
	\node[vertex,draw] (vt\x) at (\x*2-0.5,1.75) {$v_ {\x}^t$};
	\node[vertex,draw] (vb\x) at (\x*2-0.5,0) {$v_ {\x}^b$};
	\draw (vt\x) -- (vb\x);

    \foreach \a in {1,...,3} {
    	\foreach \b in {2} {
    		\node[vertex,draw,fill=black] (vt\x-\a-\b) at (\x*2 + \b*0.5 - 1.5,4.75-\a*0.5) {};
    		\draw (vt\x-\a-\b) -- (vt\x);
    	}
    }
    \draw (vt\x-1-2) -- (w1);
}

\foreach \x in {1} {
    \foreach \a in {1,...,3} {
    	\foreach \b in {1} {
    		\node[vertex,draw,fill=black] (vt\x-\a-\b) at (\x*2 + \b*0.5 - 1.5,4.75-\a*0.5) {};
    		\draw (vt\x-\a-\b) -- (vt\x);
    	}
    }

    \draw (vt\x-1-1) -- (w1);
}

\foreach \x in {4} {
    \foreach \a in {1,...,3} {
    	\foreach \b in {3} {
    		\node[vertex,draw,fill=black] (vt\x-\a-\b) at (\x*2 + \b*0.5 - 1.5,4.75-\a*0.5) {};
    		\draw (vt\x-\a-\b) -- (vt\x);
    	}
    }

    \draw (vt\x-1-3) -- (w1);
}

\foreach \x in {1,...,3} {
    \foreach \a in {1,...,3} {
    	\foreach \b in {3} {
    		\node[vertex,draw,fill=black] (vt\x-\a-\b) at (2.5 + 2*\x - 2,4.75-\a*0.5) {};
    		\draw (vt\x-\a-\b) -- (vt\x);
    		\pgfmathparse{\x+1}  \let\xx\pgfmathresult
    	    \draw (vt\x-\a-\b) -- (vt\xx);
    	}
    }
    \draw (vt\x-1-3) -- (w1);
}

\draw (vt1) -- (vb2);
\draw (vt2) -- (vb1);
\draw (vt2) -- (vb3);
\draw (vt3) -- (vb2);
\draw (vt3) -- (vb4);
\draw (vt4) -- (vb3);
\end{tikzpicture}
\end{center}
\caption{Illustration of the transformation defined in \autoref{th:no-gk-approx} from the path $v_1$---$v_2$---$v_3$---$v_4$ and $k=2$. Pending-vertices $v_i[a,b]$ are reprensented by black vertices.}
\label{fig:no-gk-approx}
\end{figure}

	Now, we claim that there exists an independent dominating set of size at most $k$ in $G$ if and only if there is a target set of size at most $k$ in $G'$.
	
	``$\Rightarrow$'': Assume that we have an independent dominating set $S$ of size at most $k$ in $G$. Let us consider the set $S' = \{v_i^t : v_i \in S\}$ as the initial activated vertices in $G'$.
During round one, all the bottom vertices in $G'$ are activated since they all have threshold one and $S$ is a dominating set. Moreover, since $S$ is also an independent set, the number of activated pending-vertices in each checking-gadget
is exactly $k(n-1)$. Thus, every vertex $w_b$, $b \in \{1,\ldots,k+1\}$, is activated at round two. Finally, at round three all the pending-vertices get activated which, in turn, activate all the top vertices next round.

	``$\Leftarrow$'': Conversely, suppose now that there is a minimal target set $S'$ of size at most $k$ in $G'$. We may assume without loss of generality that $S'$
	contains no vertices with threshold one.\todo{to be put in a lemma?} Moreover, we can assume that no vertex from $W$ belongs to $S'$. As a contradiction, suppose that $w_{b'} \in S'$ for some $b' \in \{1,\ldots,k+1\}$. By the minimality of $S'$, we know that $S' \setminus w_{b'}$ is not a target set and $w_{b'}$ cannot be activated through its neighbors. Let us consider $S' \setminus w_{b'}$ as the set of initial activated vertices. We then have $|N(w_{b'}) \cap \sigma[S' \setminus w_{b'}]| < \thr(w_{b'})$. By the construction of $G'$, this implies that $|N(w_b) \cap \sigma[S' \setminus w_b]| < \thr(w_b)$ for all $b \in \{1,\ldots,k+1\}$. Hence, no vertex from $W$ can be activated through its neighbors implying $|W \cap \sigma[S' \setminus w_b]| < k$. It follows that there is a vertex $w_{b''} \in W \setminus S'$ that is not activated \
ie $w_{b''} \
not\in \sigma[S' \setminus w_{b'}]$. Suppose now that we activate $w_{b'}$ then considering the next round $i$ of activation. Thus all the vertices in $N(w_{b'})$ are activated at round $i$.
Since $N(w_{b'}) \cap N(w_{b''}) = \emptyset$, no new vertices in $N(w_{b''})$ are activated this round.
It follows that no new top-vertex is activated during any round $j > i$ --- recall that a top-vertex has unanimity threshold. Therefore $S'$ is not a target set, a contradiction.

From now on, we assume that $S'$ contains only top-vertices. To finish the proof of the claim, it suffices to see that (1) a vertex in $W$ is activated only if the corresponding vertices of $S'$ form an independent set in $G$, (2) all the bottom-vertices are activated only if $S'$ is also a dominating set in $G$.
\qedfill
\end{proof}

We can extend the previous result to majority and constant thresholds by using previously established constructions \cite{NNUW12,chen2009}.

\begin{theorem}  \label{th:no-gk-approx-maj}
Unless $\fpt=\wtwo$, there is no \sloppy parameterized~$g(k)$-approximation algorithm for \TSS even for majority threshold for any computable function $g$.
\end{theorem}
{\begin{proof}
Let $I$ be an instance of \ids, we start with the constructed instance $I'$ defined in \autoref{th:no-gk-approx}. Then we apply the following Lemma.

\begin{lemma}[Nichterlein~\etal \cite{NNUW12}] \label{lem:majority-equiv}
   Let~$I=(G,\thr,k)$ be an instance of \TSS. Then there is an
   instance~$I'=(G',\thr',k+1)$ of \TSS such that~$\thr'$ respects
   the majority threshold condition and $I$~is a yes-instance if and only
   if~$I'$ is a yes-instance.
\end{lemma}


Hence, there exists an instance $I''$ of \TSS with majority threshold such that $I'$~is a yes-instance if and only if~$I''$ is a yes-instance. This completes the proof.
\qedfill
\end{proof}
}

\begin{theorem} \label{th:no-gk-approx-cst}
Unless $\fpt=\wtwo$, there is no \sloppy parameterized~$g(k)$-approximation algorithm for \TSS even for threshold at most two for any computable function $g$.
\end{theorem}
{\begin{proof}


Let $I=(G,\thr,k)$ be an instance of \TSS with majority threshold. Then we use the construction from the proof of Theorem 4.1~\cite{chen2009} to get a new instance $I'=(G',\thr',k)$ of \TSS with threshold at most two. The correctness follows from the fact that $I'$~is a yes-instance if and only if~$I''$ is a yes-instance and \autoref{th:no-gk-approx-maj}.
\qedfill
\end{proof}
}
\fi


\section{Conclusions} \label{sec:conclusions}
We established results concerning the parameterized complexity as well as the polynomial-time and fpt-time approximability of two problems modeling the spread of influence in social networks, namely \maxopinfluence and \maxclinfluence.

In the case of unanimity thresholds, we show that \maxopinfluence  is at least as hard to approximate as \densestksubgraph, a well-studied problem. We established that \densestksubgraph is $r(n)$-approximable for any  strictly increasing function $r$ in fpt-time w.r.t. parameter $k$.  An interesting open question consists of determining whether \maxopinfluence  is constant approximable in fpt-time. Such a positive result would improve the approximation in fpt-time for \densestksubgraph.
%
In the case of thresholds bounded by two we excluded a polynomial time approximation scheme for \maxclinfluence but we did not found any polynomial-time approximation algorithm.
Hence, the question arises, whether this hardness result can be strengthened.
%





\bibliographystyle{plain}
\bibliography{main}

\begin{thebibliography}{10}

\bibitem{aazami2009}
Ashkan Aazami and Kael Stilp.
\newblock Approximation algorithms and hardness for domination with
  propagation.
\newblock {\em SIAM Journal on Discrete Mathematics}, 23(3):1382--1399, 2009.

\bibitem{AK00}
Paola Alimonti and Viggo Kann.
\newblock Some {APX}-completeness results for cubic graphs.
\newblock {\em Theoretical Computer Science}, 237(1-2):123--134, 2000.

\bibitem{benzwi2011}
Oren Ben-Zwi, Danny Hermelin, Daniel Lokshtanov, and Ilan Newman.
\newblock Treewidth governs the complexity of target set selection.
\newblock {\em Discrete Optimization}, 8(1):87--96, 2011.

\bibitem{BCCFV10}
Aditya Bhaskara, Moses Charikar, Eden Chlamtac, Uriel Feige, and Aravindan
  Vijayaraghavan.
\newblock Detecting high log-densities: an {O}($n^{1/4}$) approximation for
  densest $k$-subgraph.
\newblock In {\em Proceedings of the 42nd ACM Symposium on Theory of Computing
  (STOC '10)}, pages 201--210, 2010.

\bibitem{Bourgeois2013}
Nicolas Bourgeois, Aristotelis Giannakos, Giorgio Lucarelli, Ioannis Milis, and
  Vangelis~{Th}. Paschos.
\newblock Exact and approximation algorithms for densest $k$-subgraph.
\newblock In {\em Proceedings of the 7th International Workshop on Algorithms
  and Computation (WALCOM '13)}, LNCS 7748, pages 114--125. 2013.

\bibitem{cai08}
Leizhen Cai.
\newblock Parameterized complexity of cardinality constrained optimization
  problems.
\newblock {\em The Computer Journal}, 51(1):102--121, 2008.

\bibitem{cai2006b}
Leizhen Cai, Siu~Man Chan, and Siu~On Chan.
\newblock Random separation: A new method for solving fixed-cardinality
  optimization problems.
\newblock In {\em Proceedings of the 2nd International Workshop on
  Parameterized and Exact Computation (IWPEC '06)}, LNCS 4169, pages 239--250.
  2006.

\bibitem{chang2009}
Ching-Lueh Chang and Yuh-Dauh Lyuu.
\newblock Spreading messages.
\newblock {\em Theoretical Computer Science}, 410(27--29):2714--2724, 2009.

\bibitem{chen2009}
Ning Chen.
\newblock On the approximability of influence in social networks.
\newblock {\em SIAM Journal on Discrete Mathematics}, 23(3):1400--1415, 2009.

\bibitem{CNNW12}
Morgan Chopin, Andr\'e Nichterlein, Rolf Niedermeier, and Mathias Weller.
\newblock Constant thresholds can make target set selection tractable.
\newblock In {\em Proceedings of the 1st Mediterranean Conference on Algorithms
  (MedALG '12)}, LNCS 7659, pages 120--133. 2012.

\bibitem{DS02}
Irit Dinur and Shmuel Safra.
\newblock The importance of being biased.
\newblock In {\em Proceedings of the 34th ACM Symposium on Theory of computing
  (STOC '02)}, pages 33--42, 2002.

\bibitem{DF99}
Rodney~G. Downey and Michael~R. Fellows.
\newblock {\em Parameterized Complexity}.
\newblock Springer, 1999.

\bibitem{dreyer2009}
Paul~A. Dreyer and Fred~S. Roberts.
\newblock Irreversible $k$-threshold processes: Graph-theoretical threshold
  models of the spread of disease and of opinion.
\newblock {\em Discrete Applied Mathematics}, 157(7):1615--1627, 2009.

\bibitem{hastad96}
Johan H{\aa}stad.
\newblock Clique is hard to approximate within $n^{1-\epsilon}$.
\newblock {\em Acta Mathematica}, 182(1):105--142, 1999.

\bibitem{kempe2003}
David Kempe, Jon Kleinberg, and {\'{E}}va Tardos.
\newblock Maximizing the spread of influence through a social network.
\newblock In {\em Proceedings of the 9th ACM SIGKDD International Conference on
  Knowledge Discovery and Data mining (KDD '03)}, pages 137--146, 2003.

\bibitem{KMSV99}
Sanjeev Khanna, Rajeev Motwani, Madhu Sudan, and Umesh Vazirani.
\newblock On syntactic versus computational views of approximability.
\newblock {\em SIAM Journal on Computing}, 28(1):164--191, 1999.

\bibitem{Kho06}
Subhash Khot.
\newblock Ruling out {PTAS} for graph min-bisection, dense $k$-subgraph, and
  bipartite clique.
\newblock {\em SIAM Journal on Computing}, 36(4):1025--1071, 2006.

\bibitem{marx06}
D\'aniel Marx.
\newblock Parameterized complexity and approximation algorithms.
\newblock {\em The Computer Journal}, 51(1):60--78, 2008.

\bibitem{NNUW12}
Andr\'e Nichterlein, Rolf Niedermeier, Johannes Uhlmann, and Mathias Weller.
\newblock On tractable cases of target set selection.
\newblock {\em Soc Network Anal Mining}, 3(2):233--256, 2013.

\bibitem{Nie06}
Rolf Niedermeier.
\newblock {\em Invitation to Fixed-Parameter Algorithms}.
\newblock Oxford University Press, 2006.

\bibitem{peleg02}
David Peleg.
\newblock Local majorities, coalitions and monopolies in graphs: a review.
\newblock {\em Theoretical Computer Science}, 282(2):231--257, 2002.

\bibitem{reddy2011}
T.~V.~Thirumala Reddy and C.~Pandu Rangan.
\newblock Variants of spreading messages.
\newblock {\em Journal of Graph Algorithms and Applications}, 15(5):683--699,
  2011.

\end{thebibliography}



\end{document}